\newtheorem{theorem}{Theorem} 
\newtheorem{conj}{Conjecture}
\newtheorem{prop}{Proposition}
\newtheorem{lemma}{Lemma}
\theoremstyle{definition}
\begin{document}
\title{Spacetime singularities and caustic of outgoing null hypersurfaces}

\author{Ashok B. Joshi}
\email{gen.rel.joshi@gmail.com}
\affiliation{PDPIAS, Charotar University of Science and Technology, Anand- 388421 (Guj), India.}

\date{\today}

\begin{abstract}
In this paper, we investigate the caustic point developed by outgoing null hypersurfaces and the behavior of null curves. We analyze three cases: Future, past, and zero caustic point terminology for outgoing null hypersurfaces in the marginally bound Lemaitre-Tolman-Bondi (LTB) metric for homogenous and inhomogeneous cases defined in the present work. We show the criteria to form local and globally visible singularity in terms of the tangent on the apparent horizon curve.
\bigskip

$\boldsymbol{key words}$: Singularity, Causal structure of spacetime, Gravitational collapse
\end{abstract}
\maketitle

\section{Introduction}
Recently, there has been a lot of discussion on spacetime singularities. To explain spacetime singularity, a series of singularity theorems are given by Penrose and Hawking, which are most known as the Penrose-Hawking singularity theorem \cite{Hawking, wald}. The Raychaudhuri equation plays a significant role in the proof of the singularity theorem. For null congruence the Raychaudhuri equation implies
    \begin{equation}
        \frac{d\theta}{d\lambda} = - \frac{1}{2}\theta^{2} - \sigma^{ab}\sigma_{ab} - R_{ab}k^{a}k^{b} \leq 0.
    \end{equation}
From the above equation, a minimum requirement of the gravitational focusing is that $d\theta/d\lambda \leq -\frac{1}{2}\theta^{2}$ which means that the \textit{strong energy condition} hold, $R_{ab}k^{a}k^{b} \geq 0$, and null geodesics are hypersurface orthogonal. Which gives:
\begin{equation}
    \theta^{-1}(\lambda) \leq \theta^{-1}(0) +\frac{\lambda}{2}
\end{equation}
where $\theta(0)$ is the initial value of the expansion scalar at the affine parameter value, $\lambda = 0$. If initially $\theta(0) < 0$, then at an affine parameter $\lambda \to 2/ \lvert \theta(0) \rvert $, the expansion scalar of null congruence becomes a negative infinite, $\theta(\lambda) = - \infty$. At that point, some of the congruences come together which results in the formation of a caustic point.

The focusing theorem for the null geodesic congruence provides information about a caustic of the null congruence. However, in reference to the coordinate system, there are two types of null geodesics classified: outgoing and incoming. In a time-evolving collapsing spacetime, the strong energy condition holds which directly implies that the ingoing null geodesic always be ingoing throughout its journey. While there is a possibility that outgoing null geodesics turn back to the coordinate center. If the center is regular or we can say that free from spacetime singularity then the ingoing null congruences will become outgoing at a particular point is called a caustic point of null generators. The \textit{null generators} of the hypersurfaces are the null geodesics that lie within the hypersurfaces. The best example is caustic development in Minkowski spacetime \cite{Poisson}. Similarly, ingoing null geodesics that enter into the event horizon also develop caustic of the congruence of null generators which is also quite an interesting example of a caustic point \cite{Poisson}. A congruence of non-intersecting null geodesics runs within this hypersurface; these are known as the event horizon null generators. In other words, we can say that the caustic of null congruences does not imply that it develops spacetime singularity.

The Raychaudhuri equation is used to formulate the Cosmic Censorship Conjecture and singularity theorems. The singularity theorems do not necessarily imply that the singularities they anticipate are concealed behind an event horizon \cite{Hawking}. Instead, they provide information regarding singularity formation. The Cosmic Censorship Conjecture(CCC) has weak and strong versions of the theory \cite{Penrose:1969pc}. The weak theory of cosmic censorship hypotheses that there can never be a singularity that can be observed from future null infinity, i.e., that the singularity is never visible to distant observers in the universe. According to the strong version, singularities cannot also be locally naked. Local visibility is the ability of a family of null geodesics to leave the singularity but not to cross the boundary of the matter cloud, instead returning to the singularity.

The primary issue with CCC is that there are numerous exact solutions to the Einstein equations that have singularities that can be seen by observers. One of them is Reissner Nordstrom spacetime, the two-parameter family of static, asymptotically flat, spherically symmetric solutions to the Einstein-Maxwell equations. In the exact solution of the Einstein-Maxwell equation, the singularity is locally visible inside the Cauchy horizon. This example provides compelling evidence against the weak cosmic censorship hypothesis. However, Simpson and Penrose\cite{Simpson:1973ua} proposed that the Reissner-Nordström singularities are not generic. They pointed out that the inner horizon of the maximally extended manifold possesses blue-shift instabilities. Understanding the stability of the Cauchy horizon in the interior of Reissner-Nordstrom spacetime black holes served as the impetus for the research of weak null singularities \cite{Dafermos:2003vim, Dafermos:2003wr, Dafermos:2003yw, Luk:2013cqa}. Null singularity is also discussed in \cite{Bambhaniya:2021jum}, the Penrose diagrams of JMN1 spacetime (Joshi-Malafarina-Narayan spacetimes are first introduced in \cite{Joshi:2011zm, Joshi:2013dva}) are discussed, for $M_{0} > 2/3$ the central singularity is null-like, and for $M_{0} < 2/3$, the singularity is timelike. As discussed in the paper the first type of JMN1 spacetime always forms a shadow when the central singularity is null.

In Ortiz and Sarbach's paper \cite{Ortiz:2017gzr}, they show that ``the scalar field cannot diverge at the Cauchy horizon, except possibly at the central singular point" in the LTB spacetime. They assume that $\Phi$ is governed by a well-posed initial value problem on $D$ and that $\Phi$ possesses a stress-energy tensor $T = T_{\mu \nu} dx^{\mu} \otimes dx^{\nu}$ which, as a consequence of the equations of motion, is divergence-free and satisfies the dominant energy condition. In their paper, they analyze the stability of the Cauchy horizon associated with a globally naked singularity, a shell-focussing singularity arising from the complete gravitational collapse of a spherical dust cloud. Even their result implies that free-falling observers co-moving with the dust particles measure finite energy of the field, even as they cross the Cauchy horizon at points lying arbitrarily close to the central singularity.

%%%%%%%%%%%%%%%%%%%%%%%%%%%%%%%%%%%%%%%%%%%%%%%%%%%%%%%%%%%%%%%%%%%%%%%%%%%%%%%%%%%%%%%%%%%%%%%%%%%%%%%%%%%%%%%%%%%%%%%
\begin{figure*}
\centering
\subfigure[]
{\includegraphics[width=45mm]{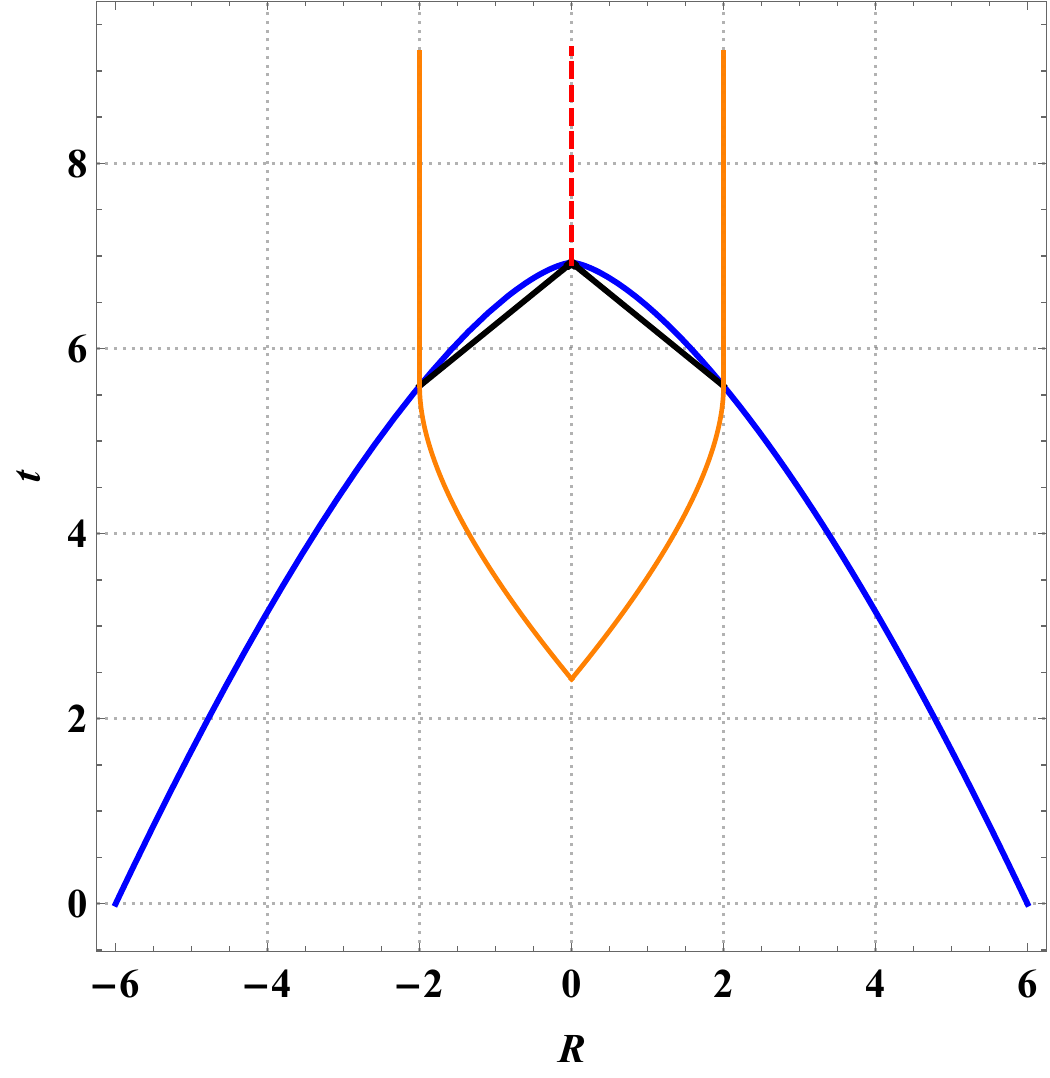}\label{ltbcollapse2}}
\hspace{0.5cm}
\subfigure[]
{\includegraphics[width=46mm]{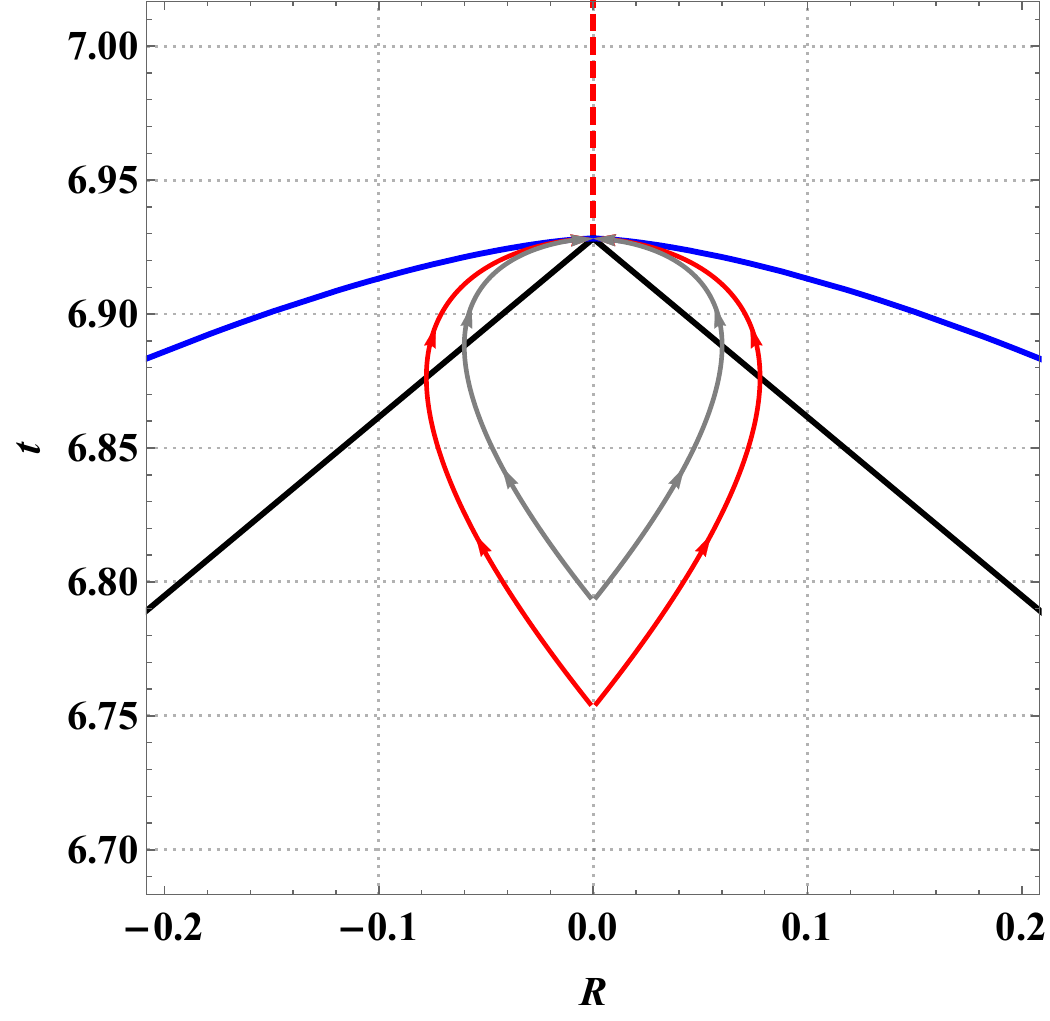}\label{ltbcollapsez1}}
\hspace{0.5cm}
\subfigure[]
{\includegraphics[width=65mm]{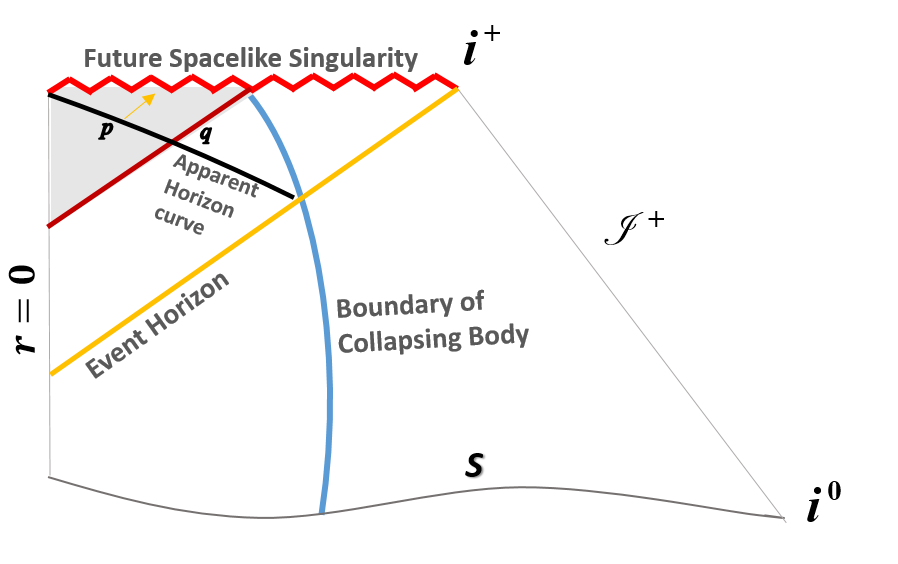}\label{penrose1}}
 \caption{This figure represents the gravitational collapse in the LTB metric, with the homogeneous density profile. The solid yellow line, blue line, and black lines represent the event horizon, the boundary of the collapsing cloud, and the apparent horizon respectively. While Fig.~(\ref{ltbcollapsez1}) is the zoomed-in picture of Fig.~(\ref{ltbcollapse2}), here red solid and gray color lines represent light rays emitted from $t_{p}$ time and with different value of $t_{o}$ time, respectively. The red dotted line is spacetime singularity. The time period, $t_{s}>t>t_{p}$, represents a caustic behavior in a homogeneous dust cloud.}\label{HomoLTB}
\end{figure*}
%%%%%%%%%%%%%%%%%%%%%%%%%%%%%%%%%%%%%%%%%%%%%%%%%%%%%%%%%%%%%%%%%%%%%%%%%%%%%%%%%%%%%%%%%%%%%%%%%%%%%%%%%%%%%%%%%%%%%%

Here we consider marginally bound LTB metric as an interior collapsing body with Lorentzian manifolds, $\mathcal{M}_{int}$, while the outside matching metric is a Schwarzschild metric, $\mathcal{M}_{ext}$(exterior metric). The total smooth continuous Lorentzian manifold is $ \mathcal{M} = \mathcal{M}_{int} \cup \mathcal{M}_{ext}$. The LTB metric is the spherically symmetric dust solution of Einstein's field equations. Marginally bound LTB metric is a subclass of the LTB metric with the condition $f(r) = 0$ \cite{psjoshi2}. The line element of the marginally bound LTB metric is given by,
\begin{equation}
ds^2 = -dt^2 + R^{\prime 2}(t,r) dr^2 + R(r,t)^2 d\Omega ^2 \label{spacetime}
\end{equation}
where,$d\Omega^2 =\left(d\theta^{2} + sin^{2}\theta d\phi^{2}\right)$.  Energy momentum tensor of Eq.~(\ref{spacetime}), 
\begin{equation}
P_r = P_{\theta} = P_{\phi} = - \frac{\dot{F}(t,r)}{\dot{R}(t,r)R(t,r)^2} = 0,
\end{equation} 
this lead to $ F(r,t) \to F(r)$ and,
\begin{equation}
\rho(t,r) = \frac{F^{'}(r)}{ R^{'}(t,r) R(t,r)^2},\,\,\,\,\,\,\, F(r) = \dot{R}(t,r)^2 R(t,r). \label{ltbmassfun}
\end{equation}
From eq. (\ref{ltbmassfun}), the physical radius, $R(t,r)$, of the marginally bound collapsing object is given by,
\begin{equation}
R(r,t)  = \left( r^\frac{3}{2} - \frac{3}{2} \sqrt{F(r)}t\right)^\frac{2}{3}, \label{physicalr}
\end{equation}
where, $F(r)$ is the Misner-Sharp mass function, which describes the mass contained in the co-moving radius $r$. Spacetime singularity is formed when the curvature is become infinite which implies that the Kretsmann scalar becomes infinite. However, the Kretschmann scalar does not tell us anything about the nature of singularity. The Kretschmann scalar $\mathcal{K} = R^{ijkl}R_{ijkl}$, which is given as:
\begin{equation}
    \mathcal{K} = \frac{12 F^{'2}}{R^{4}R^{'2}} - \frac{32 F F^{'}}{R^{5}R^{'}} + \frac{48 F^{'2}}{R^{6}}.
\end{equation}
Here, $F = F(r)$ and $R = R(t,r)$. To identify the nature of singularity we take an approach that is tangent to the apparent horizon curve at the singularity point, $t_{s}(0)$. The curve of the apparent horizon plays a critical role in the causal structure of a collapsing spacetime, which provides information about local and global visibility. The apparent horizon curve in the collapsing spacetime is calculated using the expansion scalar of the outgoing null geodesics,
\begin{equation}
     \theta_{l} = \frac{2}{R(t,r)}\left(1 -\sqrt{\frac{F(t,r)}{R(t,r)}}\right) \label{apperentH}.
\end{equation}
The inhomogeneity of the matter distribution influences the curve of the apparent horizon\cite{Jhingan:1997ia}. As a result, inhomogeneity in dust collapse is important in the global and local visibility of singularity. It was previously demonstrated that a sufficiently inhomogeneous spherically symmetric dust cloud collapse results in a singularity, from which null geodesics can escape without being caught by the trapped surfaces\cite{Mosani:2020mro}. However, it was demonstrated by Eardley and Smarr\cite{Eardley:1978tr}, Dwivedi and Joshi \cite{Dwivedi:1996wf} that adding inhomogeneity to the collapsing cloud's mass distribution might alter how the apparent horizon evolved, potentially allowing non-spacelike geodesics to escape without becoming confined. From Eq.~(\ref{physicalr}) and (\ref{apperentH}), the singularity formation time and the apparent horizon curve (when the expansion scalar of outgoing null geodesic ($\theta_{l}$) is zero,$F(r) = R(t,r)$), are given by,
\begin{equation}
t_{s}(r) =  \frac{2r^{\frac{3}{2}}}{3\sqrt{F(r)}} 
\hspace{1.4cm}
t_{ah}(r) = \left(t_{s}(r) - \frac{2}{3} F(r)\right). \label{ltbtimeah}
\end{equation}
A collection of the \textit{marginally outer trapped surface} develops a curve in the LTB metric which is known to be an apparent horizon curve ($t_{ah}(r)$) given by Eq.~(\ref{ltbtimeah}).

The process of joining spacetimes via thin shell formalism across a boundary hypersurface is $\Sigma = \partial \mathcal{M}_{int} \cap \partial \mathcal{M}_{ext} $ \cite{Israel:1966rt, Huber:2019cze}. Two junction conditions must be satisfied on the matching hypersurface for the smooth matching of two spacetimes at a timelike or spacelike hypersurface \cite{Poisson}. By smooth matching, we mean the following: The first and second fundamental forms must be equal on both sides of the matching hypersurface as discussed in \cite{Poisson}. First, the matching hypersurface $\Sigma$ induced metric on both sides must be the same, and second, the matching hypersurface's extrinsic curvature of the internal and external spacetimes must be the same. The extrinsic curvature can be written as, 
\begin{equation}
    K_{ab} = e^{\alpha}_{a}e^{\beta}_{b}\nabla_{\alpha}\eta_{\beta},
\end{equation}
where, $\eta^{\beta}$ and $e^{\alpha}_{a}$ are the normal and tangents to the hypersurfaces, respectively.
So we can write co-moving coordinates in the Eddington-Finkelstein metric,
\begin{equation}
    ds^2 = -\left(1 - \frac{2M_{ADM}}{\mathcal{R}(\nu)} + 2\frac{d\mathcal{R}(\nu)}{d\nu}\right)d\nu^2 + \mathcal{R}(\nu)^2 d\Omega^2 \label{schmatch}
\end{equation}
Here, $\nu$ is the retarded null coordinate, $M_{ADM}$ is the ADM mass of spacetime and $\mathcal{R}(\nu)$ is the evolving radius \cite{Goswami:2007na}. 
On matching Eq.~(\ref{schmatch}) with Eq.(\ref{spacetime}) at the hypersurface $r = R_{b}$, we get,
\begin{equation}
    \mathcal{R}(\nu) = R(t, R_{b}) = R_{b} \, a(t),
\end{equation}
\begin{equation}
    F(R_{b}) = 2M_{ADM}, \label{adm}
\end{equation}
\begin{equation}
    \ddot{\mathcal{R}}(\nu) = -\frac{F(R_{b})}{2\mathcal{R}(\nu)^2} = -\frac{M_{ADM}}{\mathcal{R}(\nu)^2}.
\end{equation}
For the homogeneous case, Eq.(\ref{adm}) gives, $M_{o} = (2M_{ADM})/R_{b}^{3}$. Here we have freedom in two parameters of space, one is the total mass of the collapsing body ($M_{ADM}$) and the other is the boundary of the collapsing body (outermost co-moving shell), $R_{b}$. For the inhomogeneous case, for the mass function $F(r) = M_{o}r^{3} - M_{3}r^{6}$ matching can provide the values of $M_{o}$ and $M_{3}$ as follows:
\begin{equation}
    M_{o} = \frac{4M_{ADM}}{R_{b}^3} - \frac{1}{3}\rho(0,R_{b})
\end{equation}
\begin{equation}
     M_{3} = \frac{2M_{ADM}}{R_{b}^6} - \frac{\rho(0,R_{b})}{3R_{b}^3}.
\end{equation}
Because the mass function is a positive and monotonically increasing function, it suggests that $M_{o}>0$ and $M_{3}>0$. This leads us to the conclusion that the initial density at the boundary radius must satisfy the requirement that $\rho(0, R_{b}) < 6M/R_{b}^{3}$.

Null geodesics are important in studying global and local visibility of the singularity. The radially ingoing and outgoing light-like geodesics for the Eq.~(\ref{spacetime}), $k^{\mu}k_{\mu} = 0$, we get,
\begin{equation}
    \frac{dt_{null}}{dr} = \pm R^{'}(t,r). \label{nullgeodesicsltb}
\end{equation}
Here plus, and minus correspond to outgoing and incoming null geodesics respectively. One of the simple way to solve the differential Eq.~(\ref{nullgeodesicsltb}) around the coordinate center is given as follow:
\begin{equation}
   t_{null}(r) = t_{0} + \chi_1(t_{0}) r + \chi_2(t_{0}) r^{2} + \chi_3(t_{0}) r^3 + \mathcal{O}(r^4)
\end{equation}
where,
\begin{equation}
    \chi_i(t_{0})=\frac{1}{i!}\frac{\partial ^i t(r)}{\partial r^i}\bigg |_{r=0}
\end{equation}
where $t_{0}$ is the initial comoving time for outgoing null geodesic which originates from the coordinate center, $r =0$. A null geodesic which stratifies $t(R_{b}) = t_{ah}(R_{b})$ is the \textit{first outermost marginally outer trapped surface} which is known as the event horizon.

In this paper we investigate the behavior of null curves and tangent on the apparent horizon curve to identify the nature of initially form spacetime singularity (which means $t_{s}(0)$). In the marginally bound LTB spacetime, the caustic formation in the outgoing null hypersurfaces (caustics form during the formation of a spacetime singularity) is analyzed. We classified three cases using the nature of the outgoing null hypersurfaces: past-caustic point, future-caustic point, and zero-caustic point. Using the caustic type we are classifying the nature of spacetime singularity. The investigation of the null geodesics in three cases of inhomogeneity reveals that the singularity's nature shifts from hidden to local and finally, global visibility as inhomogeneity increases, causing the transition from the `future-caustic point of outgoing null hypersurfaces' to the `zero-caustic point of outgoing null hypersurfaces' and finally, the `Past-caustic point of outgoing null hypersurfaces'. In this context, the term ``Futuer-caustic point of outgoing null hypersurfaces" is described as two or more null hypersurfaces that originated from constant time slices with two or more different physical radii and intersect at one common point in the future. While a ``Past-caustic point of outgoing null hypersurfaces" is described as emitting multiple outgoing null hypersurfaces coming out from a single spacetime point at the same time. We demonstrate that the development of a \textit{Future-caustic point of outgoing null hypersurfaces} is a direct indication of black hole development. A \textit{Past-caustic point of outgoing null congruences}, suggests both a local and a globally visible singularity.

The plan of the paper is as follows. In section (\ref{sec2}), we discuss the geometry of null hypersurfaces, and we derive the radial motion of null geodesics and tangent on the apparent horizon to classify hidden and local visible singularity, for the homogeneous and inhomogeneous cases. We discuss the example of three types of caustic points using the radial motion of outgoing hypersurfaces. In section (\ref{sec3}), we discuss the genericity of zero caustic point. Finally, in section (\ref{result}), we discuss our results. Throughout the paper, we take $G=c=1$.
%null geodesics, apparent horizon, and singularity formation time
%Consider a four-dimensional Lorentzian manifold ($\mathcal{M}$) with a line element, $g$, which fulfils all reasonable energy conditions.
%\begin{equation}
 %   g = g_{\alpha \beta} \, dx^{\alpha} \otimes dx^{\beta}
%\end{equation} 

%Using an appropriate upper and lower bound for the scalar field on the event horizon, Dafermos\cite{Dafermos:2003vim,Dafermos:2003wr} demonstrated that for a spacetime solution to the spherically symmetric Einstein-Maxwell-real scalar field system, the black hole terminates in a weak null singularity in the vicinity of timelike infinity. Dafermos-Rodnianski \cite{Dafermos:2003yw} demonstrated that the requisite upper bound holds for non-extremal black hole spacetimes resulting from asymptotically flat initial data. This suggests, in particular, that the terminal border of the Cauchy evolution does not contain a spacelike portion close to timelike infinity\cite{Luk:2013cqa}. As a result, it's extremely intriguing that the null singularity plays such an important part in gravitational physics.

\section{Spacetime singularity and Caustic in outgoing null hypersurfaces}\label{sec2}
A spacetime $(\mathcal{M}, g)$ is a marginally bound LTB spacetime, and metric on the $\mathcal{M}$ is $g$. At any point $p \in \mathcal{M}$, a null vector in the tangent space $V \in T_{p}\mathcal{M}$ satisfy the condition,
    \begin{equation}
       g(V, V) = g_{ab}V^{a}V^{b} = 0.
    \end{equation}
The null vectors at $p$ develop a light cone in the tangent space which separates timelike and spacelike vectors. Let's consider a null hypersurface $\Sigma$, and $p,q \in \Sigma$ such that $q \in J^{+}(p) \backslash I^{+}(p)$ which directly implies that $p$ and $q$ is connected by null geodesics $\gamma(\lambda)$ within the hypersurface. Their normal vector is tangent to the null hypersurface. The collection of all the null geodesics congruences that lie within the hypersurface are the null generators of the hypersurface. That implies, at each point $p \in \Sigma$ the induced metric is degenerate.

%The spacelike hypersurface $\Sigma$ in the null singularity given in fig.~(\ref{nullpenrose}) hold the condition of the domain of dependence, $\mathcal{D}(\Sigma) = \mathcal{D}^{+}(\Sigma) \cup \mathcal{D}^{-}(\Sigma)$, the $\mathcal{M}$ is globally hyperbolic. If the singularity is locally observable and $t_{eh} < t_{p} < t_{s}$, then the singularity must be an inward null singularity as seen in fig.~(\ref{nullpenrose2}).

%%%%%%%%%%%%%%%%%%%%%%%%%%%%%%%%%%%%%%%%%%%%%%%%%%%%%%%%%%%%%%%%%%%%%%%%%%%%%%%%%%%%%%%%%%%%%%%%%%%%%%%%%%%%%%%%%%%%%%%
\begin{figure*}[]
\centering
\subfigure[]
{\includegraphics[width=50mm]{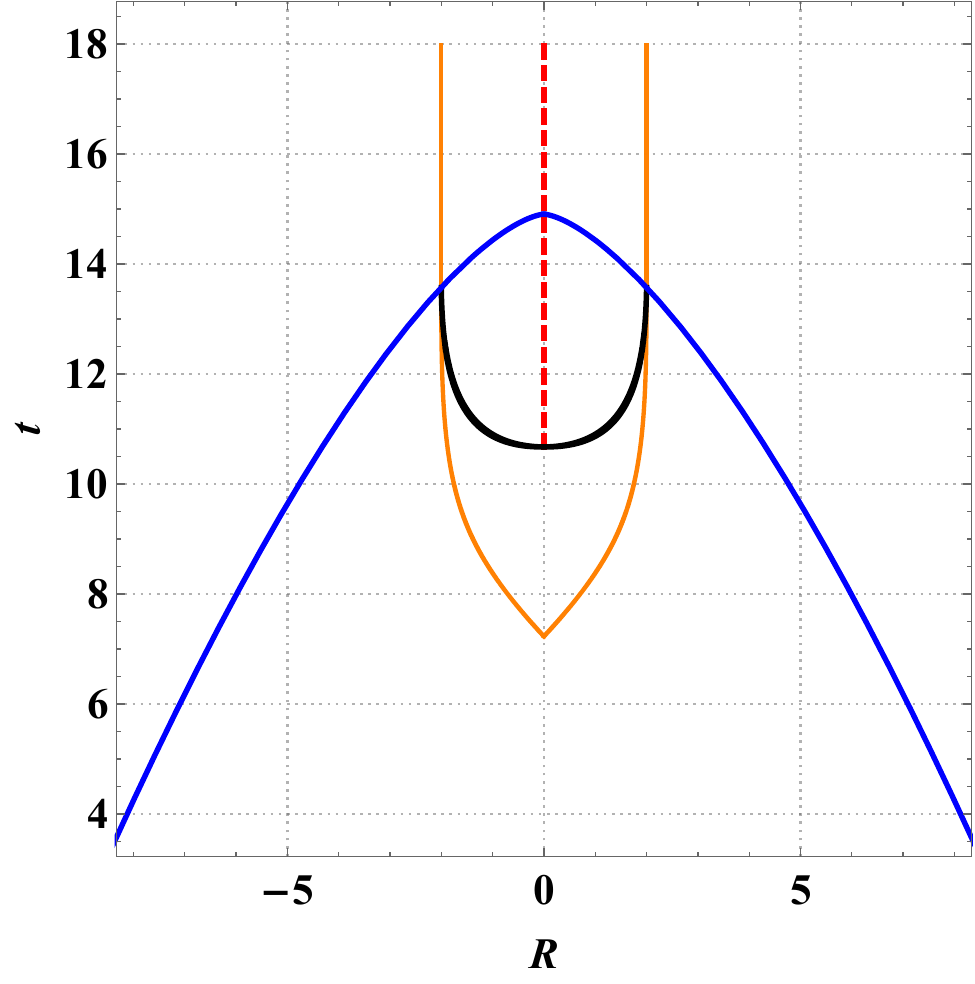}\label{ltbcollapseInHomo3}}
\hspace{0.5cm}
\subfigure[]
{\includegraphics[width=50mm]{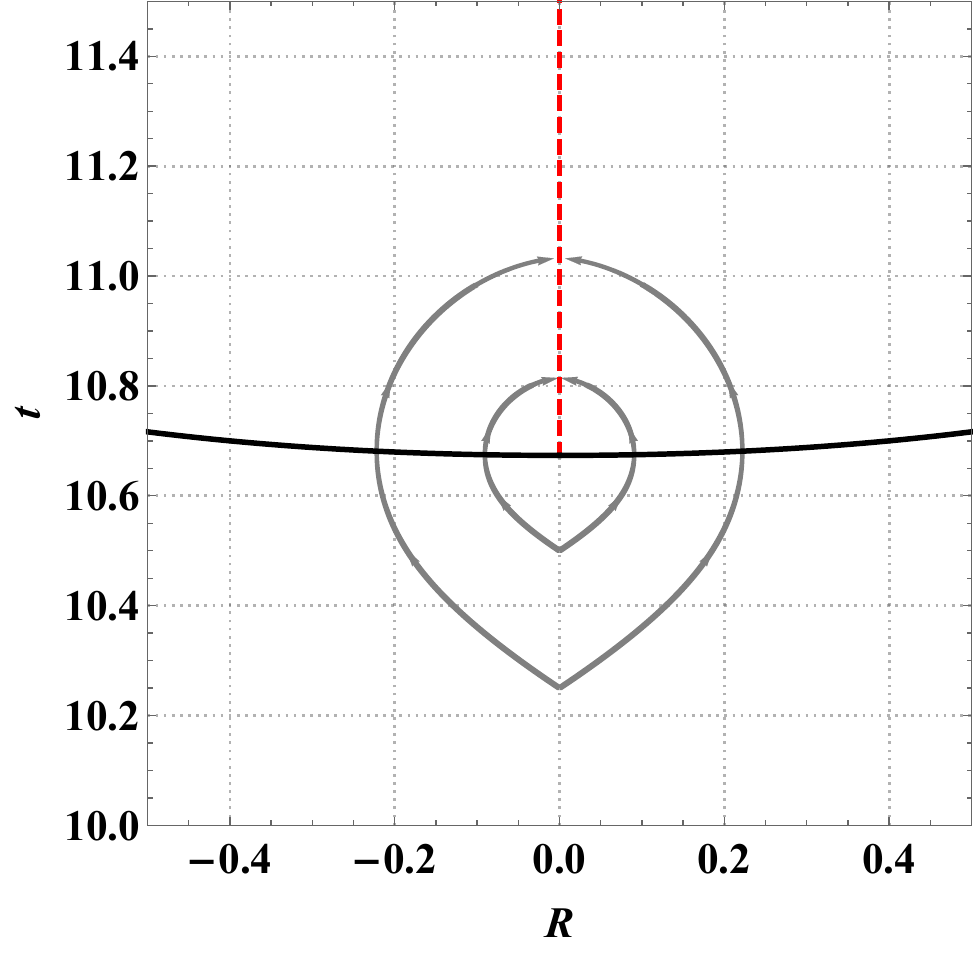}\label{ltbcollapseInHomo4}}
\hspace{0.5cm}
\subfigure[]
{\includegraphics[width=65mm]{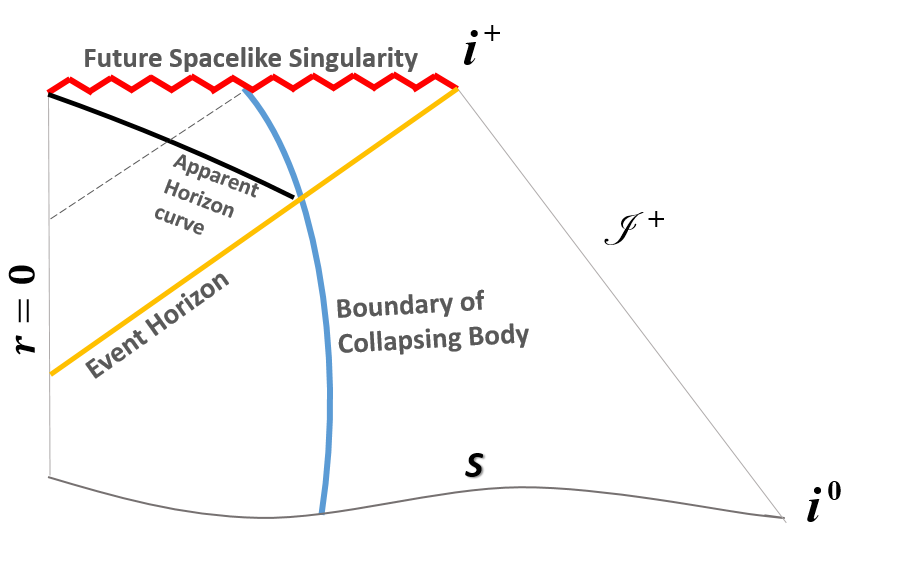}\label{penrose3}}\\
\subfigure[]
{\includegraphics[width=49mm]{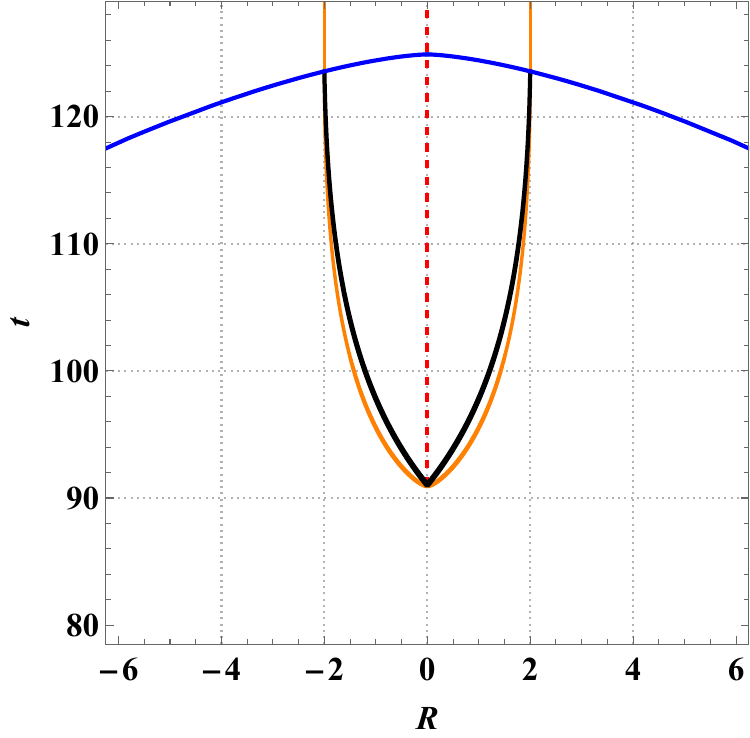}\label{ltbcollapseInHomo5}}
\hspace{0.5cm}
\subfigure[]
{\includegraphics[width=49mm]{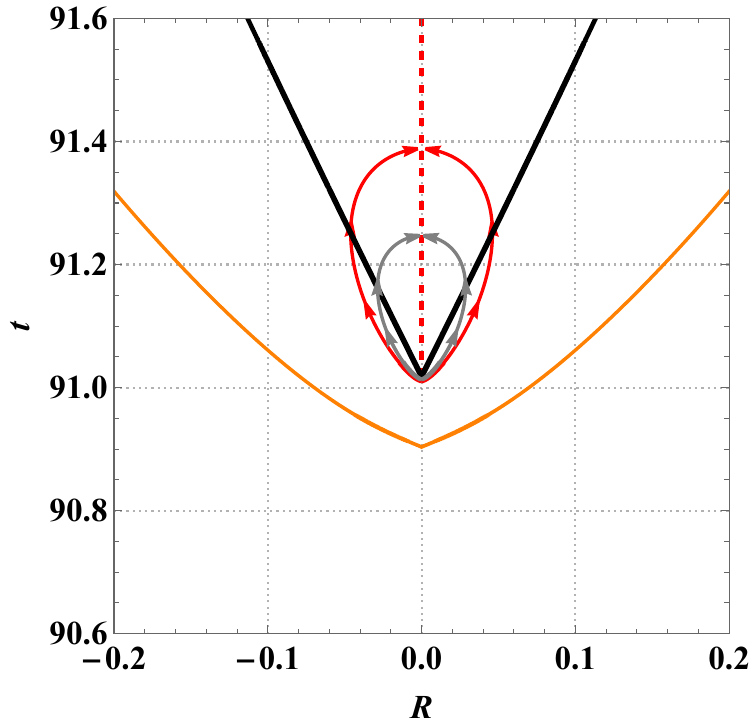}\label{ltbcollapseInHomo6}}
\hspace{0.5cm}
\subfigure[]
{\includegraphics[width=68mm]{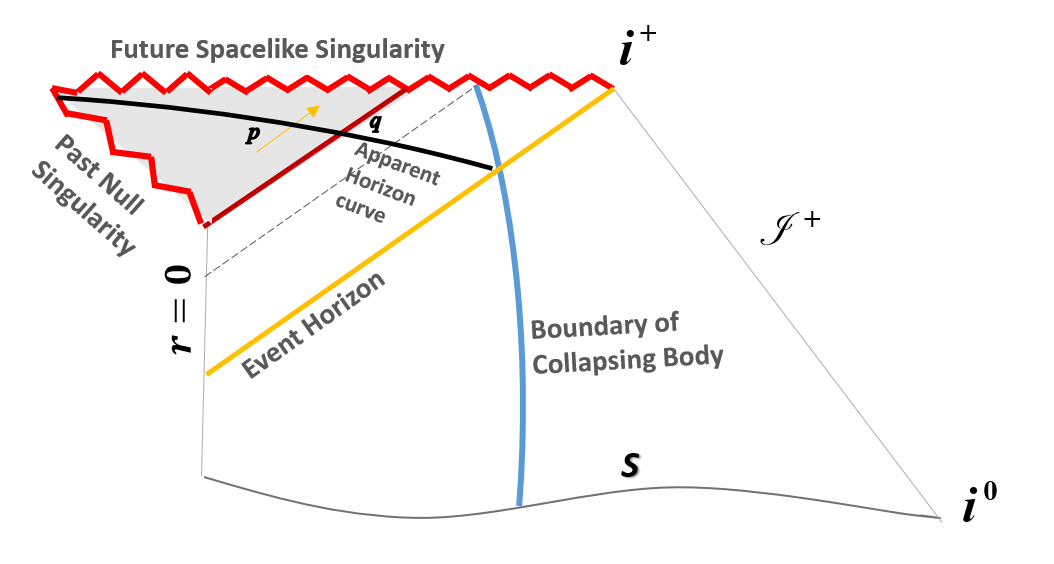}\label{penrose4}}
 \caption{This figure represents the gravitational collapse in the LTB metric, with the homogeneous density profile. The solid yellow line, blue line, red line, and black lines represent the event horizon, the boundary of the collapsing cloud, the Cauchy horizon, and the apparent horizon respectively. Past null singularity is the point in the spacetime diagram while in the conformal diagram,  it is represented as a red sig-sag line, Fig.~(\ref{penrose4}). In this figure, gravitational collapse in LTB metric is shown with the mass function $F(r) = M_{o}r^{3} - M_{3}r^{6}$. For $M =1$, $R_{b}=10M$,$\rho = 2.96* 10^{-4}$, the tangent on apparent horizon at $r=0$ have zero value as show in \ref{ltbcollapseInHomo3}. For locally naked, tangent has a value greater than 1 as shown in \ref{ltbcollapseInHomo5} with parameter, $M =1$, $R_{b}=41.25M$, $\rho = 10^{-5}$.}\label{InHomo}
\end{figure*}
%%%%%%%%%%%%%%%%%%%%%%%%%%%%%%%%%%%%%%%%%%%%%%%%%%%%%%%%%%%%%%%%%%%%%%%%%%%%%%%%%%%%%%%%%%%%%%%%%%%%%%%%%%%%%%%%%%%%%% 
We say that spacetime is to be singular if it contains an incomplete non-spacelike geodesic, $\gamma : [\ 0, t) \rightarrow \mathcal{M}$ such that there is no extension $\theta: \mathcal{M} \to \mathcal{M}^{'}$ for which $\theta \circ \gamma$ is extendible. In marginally bound LTB case in terms of geodesic incompleteness, two types of spacetime singularity are classified: \textit{future spacelike} and \textit{past null singularity} \cite{Joshi:2023ugm, Joshi:2020tlq}. Future spacelike singularity is hidden for local and observer at future null infinity. However, past null singularity is globally or locally visible. If null geodesics is future and past inextendable that is both ends have spacetime singularity which implies local visibility of singularity. In a closed gamma curve, if both ends are geodesically incomplete, the future end of the geodesic is future spacelike singularity while the past endpoint is past null singularity in marginally bound LTB spacetime. 
The maximal analytic extension of the metric provides information about the central singularity, where the curvature becomes infinite at $r=0$ which fails to represent spacetime geometry. Therefore, singularity $\mathcal{S}$, $\mathcal{S} \cap \mathcal{M} = \emptyset$, which is not a part of the manifold. Timelike singularity (Naked singularity) is connected to the timelike region,$J^{-}(\mathcal{J}^{+})$, $\mathcal{S} \subset J^{-}(\mathcal{J}^{+})$ which implies that the spacetime is not globally hyperbolic.
If the timelike region, $J^{-}(\mathcal{J}^{+})$, is not connected with singularity, $\mathcal{S}$, which means $\mathcal{S} \cap J^{-}(\mathcal{J}^{+}) = \emptyset$ which implies that $\exists$ event horizon $\mathcal{H}(= \partial J^{-}(\mathcal{J}^{+}))$ in the manifold, iff $\mathcal{S} \neq \partial J^{-}(\mathcal{J}^{+})$. The topology of the event horizon in Schwarzschild spacetime is $\mathbb{S}^{2} \times \mathbb{R}$ and is embedded null hypersurface in the manifold, $\mathcal{M}$.  In Schwarzschild spacetime, the black hole region, $\mathcal{B} (=  \mathcal{M} \setminus J^{-}(\mathcal{J}^{+}))$, of the spacetime manifold $\mathcal{M}$, and the event horizon $\mathcal{H}$ cover a singularity. 

The end state of inhomogeneous dust collapse predicts that singularity could give both possibilities that is, singularity is hidden or globally visible to the asymptotic observer. A singularity contained in Schwarzschild's black hole spacetime is called a future-spacelike singularity. A singularity contained in Schwarzschild's spacetime is a spacelike singularity. Spacelike singularity is subdivided into two parts: future and past spacelike singularity \cite{Joshi:2023ugm}

\begin{lemma}
    Let us consider an outgoing null geodesic $\gamma(t_{null})$, originate from center ($r=0$) with comoving time $t_{0}$, and at $\theta_{l} = 0$ comoving time is $t_{1}$, then $t_{1}$ is always greater then $t_{0}$ for all $\gamma$. Where, $t_{0},t_{1} \in t_{null}$.
\end{lemma}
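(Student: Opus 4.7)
The plan is to prove the lemma by a direct monotonicity argument: along any outgoing null geodesic emanating from the regular center, the comoving time $t_{null}$ is a strictly increasing function of the comoving radius $r$, so the time at which the geodesic meets the apparent horizon at some $r_{1}>0$ must exceed $t_{0}=t_{null}(0)$.

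First, from Eq.~(\ref{nullgeodesicsltb}) an outgoing curve satisfies $dt_{null}/dr=+R'(t,r)$, so the task reduces to establishing strict positivity of $R'(t,r)$ on the arc of $\gamma$ between $(0,t_{0})$ and $(r_{1},t_{1})$. Direct differentiation of Eq.~(\ref{physicalr}) yields
\begin{equation*}
R'(t,r)=\frac{1}{\sqrt{R(t,r)}}\left(\sqrt{r}-\frac{F'(r)\,t}{2\sqrt{F(r)}}\right),
\end{equation*}
which is strictly positive throughout the regular collapsing region (away from shell-crossing singularities). For the homogeneous profile $F(r)=M_{o}r^{3}$ this reduces to $R'(t,r)=(1-\tfrac{3}{2}\sqrt{M_{o}}\,t)^{2/3}$, manifestly positive for all $t<t_{s}(0)$; for the inhomogeneous profile $F(r)=M_{o}r^{3}-M_{3}r^{6}$ positivity follows under the constraint $\rho(0,R_{b})<6M/R_{b}^{3}$ imposed just after Eq.~(\ref{adm}). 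Next, since the apparent horizon condition $F(r)=R(t,r)$ fails at $r=0$ (both sides vanish and indeed $\theta_{l}\to\infty$ as $r\to 0^{+}$ from Eq.~(\ref{apperentH})), the radius $r_{1}$ at which $\theta_{l}$ first vanishes is strictly positive.

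Integrating $dt_{null}/dr=R'$ from $0$ to $r_{1}$ and using strict positivity of the integrand on the non-degenerate interval $(0,r_{1}]$ then gives
\begin{equation*}
t_{1}-t_{0}=\int_{0}^{r_{1}}R'\bigl(t_{null}(r),r\bigr)\,dr>0,
\end{equation*}
which is the lemma. The main (mild) obstacle is ensuring that $R'$ remains strictly positive on the entire arc of $\gamma$ between $(0,t_{0})$ and $(r_{1},t_{1})$, i.e.\ that no shell-crossing singularity intervenes between the regular center and the apparent horizon; this is guaranteed by the regularity of the admissible mass functions and the density bound derived from the matching conditions, under which $R'$ cannot vanish before either the apparent horizon or the shell-focusing singularity is reached.
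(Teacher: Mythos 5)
Your argument is correct, and it takes a genuinely different route from the paper's. The paper works in the $(t,R)$ plane: it uses $\dot{R}(t,r)=R\,\theta_{l}=1-\sqrt{F/R}$ (Eq.~(\ref{rdot})) along the outgoing geodesic, notes $\dot{R}=1$ at the regular center for $t_{0}<t_{s}$ and $\dot{R}=0$ on the apparent horizon, and then infers $t_{0}<t_{1}$ from a qualitative monotonicity claim about the tangent $dt_{null}/dR$ — a step the paper asserts rather than establishes. You instead work in the comoving $(t,r)$ plane: $dt_{null}/dr=+R'(t,r)$ from Eq.~(\ref{nullgeodesicsltb}), strict positivity of $R'=\frac{1}{\sqrt{R}}\bigl(\sqrt{r}-\frac{F'(r)\,t}{2\sqrt{F(r)}}\bigr)$ in the regular region, the observation that $\theta_{l}\to+\infty$ at the regular center so the first zero of $\theta_{l}$ occurs at $r_{1}>0$, and then a one-line integration. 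This is more elementary and, in fact, more rigorous than the paper's proof; its only extra hypothesis (no shell crossing between the center and the apparent horizon) is automatically satisfied for the mass profiles the paper uses, since along the arc one has $\theta_{l}>0$, hence $R>F>0$ and $t<t_{s}(r)$, while for $F(r)=M_{o}r^{3}-M_{3}r^{6}$ with $M_{o},M_{3}>0$ one checks $rF'(r)\leq 3F(r)$, so the shell-crossing time exceeds the shell-focusing time and $R'>0$ throughout the relevant region — you could make that explicit rather than appealing to the density bound after Eq.~(\ref{adm}). What the paper's route buys instead is direct contact with the expansion-scalar and trapped-region language ($\dot{R}<0$ inside $\mathcal{T}$) that its later lemmas and theorems reuse.
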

\begin{proof}
    The expansion scalar of outgoing null congruences is being used to identify the causal structure of spacetime geometry. In the marginally bound LTB case, $\dot{R}(t,r)$ for the null geodesic can be written as follows:
\begin{equation}
    \dot{R}(t,r) = R(t,r) \theta_{l} = \left(1 -  \sqrt{\frac{F(r)}{R(t,r)}}\right).\label{rdot}
\end{equation}
Hence, from Eq.~(\ref{rdot}),
\begin{equation}
    \frac{dt_{null}}{dR} = \left(1 -  \sqrt{\frac{F(r)}{R(t,r)}}\right)^{-1} \label{dtnulldR}
\end{equation}
Where over-dot on $R$ represents derivative with respect to the comoving time, $t$. The congruence of light-like geodesics on the apparent horizon $\theta_{l}=0$, $\dot{R}(t(r),r)=0$, while for timelike $\dot{R}(t,r)= -1$ (as we can see from Eq.~(\ref{ltbmassfun})). While, at center $r = 0$ and $t = t_{0}$, $\Dot{R} = 1$ for $t_{o} < t_{s}$. At a coordinate center for all outgoing null geodesic $\theta_{l}$ is positive infinity that is a caustic of null generators. Therefore, at $\theta_{l}=0$, marginally outer trapped surfaces are the turning points at which radially outgoing light turns back to the coordinate center. While at $\dot{R}(t,r) < 0$, trapped surfaces are present. Therefore, if $\gamma$ is smooth continuous, and the tangent value is monotonically decreasing, then we can analyze that $t_{o} < t_{1}$. As we see the relationship between the expansion scalar and null geodesics from Eq.~(\ref{rdot}),  we analyze that $t_{o} < t_{1}$.
\end{proof}

\begin{lemma}
    If $\gamma$ is an outgoing null geodesic and the expansion scalar of the outgoing null geodesic at $R \to 0$ is $\theta_{l} \to -\infty$, then $\frac{dt_{null}}{dR} \to 0$.
\end{lemma}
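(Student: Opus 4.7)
The plan is to reduce the statement to an elementary limit computation using the two identities already in hand: the relation between $dt_{null}/dR$ and the combination $1-\sqrt{F/R}$ derived in Lemma~1, and the formula~\eqref{apperentH} expressing $\theta_l$ in terms of that same combination. Combining Eq.~\eqref{rdot} and Eq.~\eqref{apperentH} one obtains the clean identity
\begin{equation}
\frac{dt_{null}}{dR} \;=\; \bigl(1-\sqrt{F(r)/R}\bigr)^{-1} \;=\; \frac{2}{R\,\theta_l},
\end{equation}
so the behaviour of $dt_{null}/dR$ along $\gamma$ is entirely controlled by the behaviour of the scalar quantity $R\,\theta_l$. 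This reformulation is the only real input one needs.

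Next, I would analyze the product $R\theta_l$ in the limit $R\to 0$ along $\gamma$. From \eqref{apperentH} one has $R\,\theta_l = 2\bigl(1-\sqrt{F(r)/R}\bigr)$. Since we are taking a limit along the null geodesic approaching the singular region with $r>0$ so that $F(r)>0$, the quantity $\sqrt{F(r)/R}$ diverges to $+\infty$ as $R\to 0$, so $R\,\theta_l \to -\infty$. In particular this eliminates the apparent $0\cdot\infty$ indeterminate form: the hypothesis $\theta_l\to-\infty$ is consistent with $R\to 0$ and is in fact forced by it, with $\theta_l$ diverging strictly faster than $R^{-1}$.

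The conclusion is then immediate: since $R\,\theta_l\to-\infty$, its reciprocal tends to $0$, and therefore
\begin{equation}
\frac{dt_{null}}{dR} \;=\; \frac{2}{R\,\theta_l} \;\longrightarrow\; 0.
\end{equation}
Geometrically this says that as the null generator approaches the trapped region and the congruence focuses catastrophically, the $t$--$R$ slope of the null curve flattens out, so the generator becomes tangent to a $t=\mathrm{const}$ slice in the $(t,R)$ picture.

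The only delicate point — and it is more an interpretive than a technical obstacle — is making sure the limit $R\to 0$ taken along $\gamma$ is compatible with the singular structure: one must exclude the regular center, where $F(0)=0$ and $\theta_l\to+\infty$ (the past caustic of Lemma~1) rather than $-\infty$. Restricting to points on $\gamma$ with $F(r)>0$, i.e.\ approaching the shell-focussing singularity at $r>0$ from inside the trapped region, the computation above is rigorous and gives $dt_{null}/dR\to 0$ as claimed.
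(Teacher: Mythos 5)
Your proposal is correct and follows essentially the same route as the paper: both invert the relation $\frac{dt_{null}}{dR}=\bigl(1-\sqrt{F(r)/R}\bigr)^{-1}$ (equivalently, the reciprocal of $\dot{R}=R\,\theta_{l}$ up to the factor of $2$ from Eq.~(\ref{apperentH})) and argue that the hypothesis makes the denominator diverge, so the slope tends to zero. If anything you are more careful than the paper's one-line argument, since you explicitly note that one must have $F(r)$ bounded away from zero (i.e.\ exclude the regular center, where $\theta_{l}\to+\infty$) so that $\sqrt{F/R}\to\infty$ and the apparent $0\cdot\infty$ form is resolved, a point the paper glosses over (its condition ``$F(r)\ll R(t,r)$'' appears to be a typo for the opposite inequality).
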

\begin{proof}
    From the Eq.~(\ref{rdot}) and Eq.~(\ref{dtnulldR}), if $F(r) << R(t,r)$, at the future end of outgoing null geodesic, when $R(t,r) \to 0$ and $\theta_{l} \to -\infty$ gives $\dot{R}(t,r) \to - \infty$, that directly implies that $\frac{dt_{null}}{dR} \to 0$. In the same case, at $R(t,r) \to 0$, $\theta_{n} \to - \infty$, which suggests that the light cone close up at $R = 0$. That will give future spacelike singularity.
\end{proof}
\begin{prop}
    Consider a marginally bound LTB spacetime $(\mathcal{M}, g)$ seeded by the pressureless fluid, and if it contains a Cauchy horizon ($\mathcal{CH}^{+}$) within the spacetime manifold is outermost past inextendible null hypersurface. Then, $\forall \gamma$ within the $\mathcal{CH}^{+}$ is past inextendible.  Hence, if $p \in \mathcal{CH}^{+}$ and also $p \in \gamma$ then $\gamma$ is past inextendible. 
\end{prop}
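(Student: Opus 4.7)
My plan is to argue by contradiction, exploiting the hypothesis that $\mathcal{CH}^{+}$ is the \emph{outermost past inextendible} null hypersurface. Let $\gamma\subset\mathcal{CH}^{+}$ be a null generator through $p$, which, by Eq.~(\ref{nullgeodesicsltb}), satisfies $dt_{null}/dr=R'(t,r)$ as a null geodesic in $\mathcal{M}$. Suppose, for contradiction, that $\gamma$ is past extendible in $\mathcal{M}$, with a past endpoint $q\in\mathcal{M}$ along its generator direction. Since $q$ is a regular point of the smooth manifold, the geodesic equation extends $\gamma$ uniquely through $q$ to slightly earlier affine parameter, producing $\tilde\gamma\supsetneq\gamma$.

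I would then upgrade this pointwise extension to an extension of the whole null hypersurface. The null normal $k^{a}$ along $\gamma$ is the restriction of a smooth null vector field on a neighborhood of $\gamma$ in $\mathcal{CH}^{+}$; smooth dependence of geodesics on initial data lets me extend the nearby generators of $\mathcal{CH}^{+}$ into the past of $q$, sweeping out a null hypersurface $\tilde{\mathcal{H}}\supsetneq\mathcal{CH}^{+}$. This strictly larger null hypersurface is a past extension of $\mathcal{CH}^{+}$, contradicting the assumed past inextendibility (and violating the outermost clause as well). Hence no such $q\in\mathcal{M}$ exists and $\gamma$ is past inextendible. In the marginally bound LTB setting this is consistent with each generator terminating, in the past, only at the central singular point $(t_{s}(0),0)$, which by $\mathcal{S}\cap\mathcal{M}=\emptyset$ is not in $\mathcal{M}$, matching the standard picture of $\mathcal{CH}^{+}$ emanating from the naked singularity.

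The main obstacle will be the hypersurface-extension step: I have to rule out that the extended generators form a caustic at or just before $q$, for then the extension $\tilde{\mathcal{H}}$ would not be a smooth null hypersurface and the contradiction would fail. For $q\in\mathcal{M}$ away from the singular set, the expansion scalar $\theta_{l}$ in Eq.~(\ref{apperentH}) is finite by smoothness of the metric, and by Lemma~1 the outgoing null generator congruence has $\theta_{l}>0$ strictly to the past of the apparent horizon curve $t_{ah}(r)$; these two facts together prevent a focusing singularity at $q$ and secure the hypersurface extension, closing the argument.
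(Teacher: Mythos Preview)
Your argument is essentially sound, but it follows a different route from the paper's. The paper does not give a self-contained proof: it recalls the standard definition $\mathcal{CH}^{+}(S)=\overline{D^{+}(S)}\setminus I^{-}[D^{+}(S)]$ for a closed achronal $S$ and then defers to Wald's Theorem~8.3.5, which asserts that every $p\in H^{+}(S)$ lies on a null geodesic contained in $H^{+}(S)$ that is either past inextendible or terminates on $\mathrm{edge}(S)$. In the LTB naked-singularity setting the relevant $S$ is effectively edgeless (its would-be edge is the central singular point, excised from $\mathcal{M}$), so past inextendibility of the generators follows. Your proof instead takes the stated hypothesis ``outermost past inextendible null hypersurface'' at face value and runs a direct maximality contradiction; this is more elementary and avoids the limit-curve machinery behind Wald's theorem, at the price of leaning on a hypothesis whose precise meaning is not entirely standard.

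Two minor points on your caustic step. First, Lemma~1 concerns the ordering $t_{0}<t_{1}$ along an outgoing null geodesic and does not by itself yield $\theta_{l}>0$ to the past of $\mathcal{A}$; that sign comes directly from Eq.~(\ref{apperentH}) once $R>F$. Second, for the local hypersurface extension you only need $\theta_{l}$ to be \emph{finite} at the regular point $q$, so that nearby generators do not focus exactly there; positivity is unnecessary, and you have not argued that $q$ actually lies outside the apparent horizon (indeed $\mathcal{CH}^{+}$ may cross $\mathcal{A}$). Dropping the positivity claim and retaining only finiteness, which is automatic at any smooth point of $\mathcal{M}$, already suffices to close your contradiction.
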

\begin{proof}
In marginally bound LTB spacetime, the Cauchy horizon is a null hypersurface with a topology $\mathbb{S}^2 \times \mathbb{R}^2$. The caustic developed by null generators for the Cauchy horizon develops spacetime singularity. Let's consider a closed achronal set, S. The future Cauchy horizon of $S$ is defined as follows:
\begin{equation}
    \mathcal{CH}^{+}(S) = \overline{D^{+}(S)} \setminus I^{-}[D^{+}(S)],\label{CauchyD}
\end{equation}
and similar analogy for $\mathcal{CH}^{-}(S)$. Where, $D^{+}(S)$ is the \textit{future Cauchy development} of S. If a set S is achronal and $C^{0}$-submanifold of $\mathcal{M}$. There do not exist $p,\,q \in S$ such that $q \in I^{+}(p)$. The proof of this is similar to that of the theorem given in \cite{wald} (theorem 8.3.5). So, here we do not discuss the detailed proof but will discuss an interesting concept associated with a tangent on the apparent horizon curve, $\mathcal{A}$. Now, from the definition of Cauchy horizon, it's clear that if $p \in \mathcal{CH}^{+}(S)$ and $ q \in I^{-}(p)$ then $\gamma$ passing through $q$ is past and future complete.
\end{proof}

Mainly three cases are widely discussed in LTB spacetime: hidden, locally visible, and globally visible spacetime singularity. Locally visible and hidden singularities are covered within the event horizon. Here it is explained by two concepts, the tangent on null geodesics, and the tangent on the apparent horizon curve at the coordinate center.  The position and dynamics of the apparent horizon determine the global and local visibility of spacetime in the marginally bound LTB collapse. The global and local visibility information is thus provided by the tangent field on the apparent horizon at r = 0. From eq.(\ref{ltbtimeah}):
\begin{equation}
   \frac{d t_{ah}}{d R} = \lim_{r\to 0} \left(-\frac{2}{3} + \frac{\sqrt{r}(3F(r) - r F^{'}(r))}{3F(r)^{3/2} F^{'}(r)}\right)
\end{equation}
For example, let us consider an in-homogeneous mass profile $F(r) = M_{o}r^{3} - M_{3}r^{6}$, the tangent field on the apparent horizon can be written as follows:
\begin{equation}
   \frac{dt_{ah}}{dR}\Big|_{r = 0} = \frac{1}{3}\left(-2 + \frac{M_{3}}{M_{0}^{5/2}}\right).\label{tangentah}
\end{equation}

\begin{lemma}
    If $p \in \mathcal{A}$ and $q \in J^{+}(p)$ then the $q \in \mathcal{T}$.
\end{lemma}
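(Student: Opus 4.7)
The plan is to exploit the explicit characterization of $\mathcal{A}$ and $\mathcal{T}$ furnished by Eq.~(\ref{apperentH}). Since $\theta_{l}=(2/R)\bigl(1-\sqrt{F/R}\bigr)$, the condition $p\in\mathcal{A}$ is equivalent to $R(t_{p},r_{p})=F(r_{p})$ and $q\in\mathcal{T}$ is equivalent to $R(t_{q},r_{q})<F(r_{q})$. So the claim reduces to proving $R<F$ throughout $J^{+}(p)$ whenever $R=F$ at $p$.

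First I would treat the comoving future of $p$. The collapsing branch of Eq.~(\ref{ltbmassfun}) gives $\partial_{t}R|_{r}=-\sqrt{F(r)/R}$, which is strictly negative for $R>0$. Along the comoving worldline of $r_{p}$ the radius $R$ therefore strictly decreases past the value $F(r_{p})$, so every point on that worldline strictly to the future of $p$ lies in $\mathcal{T}$. This already settles the vertical direction in the $(t,r)$ plane.

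To extend this to all of $J^{+}(p)$ I would invoke the Raychaudhuri focusing argument sketched in the introduction. Because the dust stress-energy trivially obeys the null energy condition and the outgoing radial null congruences are hypersurface orthogonal and shear-free, the Raychaudhuri equation at the top of the introduction forces $d\theta_{l}/d\lambda\leq -\theta_{l}^{2}/2\leq 0$ along any future-directed outgoing null geodesic from $p$. With $\theta_{l}(p)=0$ as initial datum, this integrates to $\theta_{l}\leq 0$ along every such geodesic, placing every point on a future-directed outgoing null emission from $p$ in $\mathcal{T}$. By the spherical symmetry of the marginally bound LTB metric, $\theta_{l}$ depends only on $(t,r)$, so $\mathcal{A}$ and $\mathcal{T}$ are completely determined by the curve $t_{ah}(r)$ in the $(t,r)$ plane, and the boundary of $J^{+}(p)$ is bracketed by the two radial outgoing null rays from $p$ together with the comoving future already handled.

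The main technical obstacle is showing that the hypersurface $\mathcal{A}$ is not timelike, so that no causal curve issuing from $p$ can leak back across $\mathcal{A}$ into the untrapped region. I would verify this by comparing the slope $dt_{ah}/dR$ obtained from Eq.~(\ref{ltbtimeah}) with the outgoing null slope $dt/dr=R'$ from Eq.~(\ref{nullgeodesicsltb}) evaluated on $\mathcal{A}$; equivalently, checking that the tangent to $\mathcal{A}$ lies inside or on the local light cone at every point. Once $\mathcal{A}$ is established to be spacelike or null, $J^{+}(p)$ is forced into $\{t>t_{ah}(r)\}$, i.e.\ into $\mathcal{T}$, closing the argument. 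This step is where the physical input, apparent-horizon evolution controlled by the Misner--Sharp mass $F(r)$ and the collapsing branch of Eq.~(\ref{ltbmassfun}), actually does the work.
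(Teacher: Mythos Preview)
Your approach is considerably more detailed than the paper's. The paper's proof is essentially three lines: it writes down the definition $\mathcal{T}=\{(t,r):\theta_{l}<0\ \&\ \theta_{n}<0\}$, invokes Lemma~1 (the statement that $\dot R=1-\sqrt{F/R}$ along an outgoing null ray is monotonically decreasing), and asserts that once $\theta_{l}$ is nonpositive the geodesic stays trapped. It does not separately treat ingoing null directions, it does not examine the causal character of $\mathcal{A}$, and it never invokes Raychaudhuri explicitly. Your Raychaudhuri argument for the outgoing family is a cleaner formulation of what Lemma~1 is doing, and your decomposition into comoving, outgoing, and the remaining boundary is the standard rigorous route.

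There is, however, a genuine gap in your closing step. You plan to finish by verifying that $\mathcal{A}$ is nowhere timelike, so that no causal curve from $p$ can cross back into the untrapped region. But this is precisely what can fail in the inhomogeneous models that the paper is about: Theorem~3 is built on the regime $\left.dt_{ah}/dR\right|_{r=0}>1$, where the apparent-horizon slope exceeds the outgoing null slope $\left.dt_{null}/dR\right|_{r=0}=1$ near the centre. In LTB dust the marginally trapped tube is not guaranteed to be spacelike; it can be timelike exactly in the locally naked cases (a future \emph{inner} trapping horizon in Hayward's terminology). So your proposed verification will not go through in the situations the paper is most interested in, and the lemma as stated would actually be false at such points of $\mathcal{A}$, since $J^{+}(p)$ then straddles $\mathcal{A}$. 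The paper's brief proof sidesteps this by only following outgoing null geodesics via Lemma~1 and tacitly asserting the rest; a fully honest argument would either restrict the lemma to the achronal portion of $\mathcal{A}$, or replace the causal-character check by a direct monotonicity computation for $R-F$ along \emph{both} radial null directions, which your Raychaudhuri step supplies for the outgoing family but not for the ingoing one. A minor slip as well: the boundary of $J^{+}(p)$ in the $(t,r)$ plane is generated by one outgoing and one \emph{ingoing} radial null ray, not ``two radial outgoing null rays.''
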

\begin{proof}
    In spherical symmetry, Trapped surfaces ($\mathcal{T}$) are spacelike $S^2$ surfaces that hold conditions that the expansion scalar of outgoing and ingoing null geodesics ($\theta_{l}$ and $\theta_{n}$ respectively) must be negative.
    \begin{equation}
        \mathcal{T} = \{(t,r): \theta_{l} < 0 \,\,\,\ \& \,\,\,\ \theta_{n} < 0 \}
    \end{equation}
    As discussed in \textit{Lemma 1}, if $\dot{R}(t,r)$ or $\theta_{l}$ is negative for any null geodesics that geodesics will trapped into trapped surfaces. Therefore future of every non-spacelike geodesic of $\mathcal{A}$ is in trapped surfaces.
\end{proof}

In the case of null congruences, Caustic is the term commonly used to describe a singularity of congruence, not for the singularity in the spacetime structure \cite{wald}. As discussed in \cite{Poisson}, the caustics of the congruence of \textit{null generators} are the entry points into the event horizon. That caustic is not a caustic of null hypersurfaces and spacetime singularity. To broaden the concept, $\theta_{l}$ tends to positive and negative infinity, giving the caustic congruence of outgoing null geodesics. Congruences of null hypersurfaces can be defined in the same manner as congruences of null geodesics can. A congruence of null hypersurfaces in $\Phi$ (spacelike 3 surfaces) is a family of null hypersurfaces such that through each $S^2 \subset \Phi$ these passes precisely one null hypersurface in this family. However, if a trapped surface forms, then massive and mass-less particles will eventually end up at spacetime singularity and form a black hole.

\hspace{0.5cm}

\begin{theorem}
Consider a marginally bound LTB spacetime $(\mathcal{M},g)$, which is seeded by dust fluid and satisfies the energy conditions. Let $\mathcal{A}$ be considered an apparent horizon curve. Then singularity is to be hidden and future caustic of an outgoing null hypersurface will form iff the condition holds: $\frac{dt_{ah}}{dR}\Big|_{r = 0} < 0$.
\end{theorem}
\begin{proof} 
In that, $\exists$ a subset $\mathcal{X} \subset \mathcal{A}$, such that $q \in edge(\mathcal{X})$ and $q \in \Sigma$, the outgoing null hypersurface ($\Sigma$) passing through $q$, $\Sigma$ must be future inextendible (Here, inextendible hypersurface means every null geodesic which generate hypersurface are all inextendible.), and if the caustic of a null hypersurface form at $t_{s}(0)$. Then,
\begin{enumerate}
    \item A congruence of null hypersurfaces $\Sigma_{1}$ and $\Sigma_{2}$ intersect a subset $\mathcal{X}$, and
    \item The physical radius $R_{1}$ of $\Sigma_{1}$ and $R_{2}$ of $\Sigma_{2}$ at the $\theta_{l} \to -\infty$ (at the caustic of outgoing null hypersurfaces) always gives $(R_{1} - R_{2}) \to 0$ as $t \to t_{s}(0)$.
\end{enumerate}
Now, as we discussed in \textit{Lamma 2} at the future end of outgoing null geodesics $\frac{d t_{null}}{d R} = 0$. Now, we consider in theorem  $\frac{dt_{ah}}{dR}\Big|_{r = 0} < 0$, that is if $\frac{dt_{ah}}{dR}\Big|_{r = 0} < \frac{dt_{null}}{dR}\Big|_{r = 0} $. Therefore, $\exists$ a subset $\mathcal{X}$, with initial conditions on $\gamma$, $t_{0}<t_{s}(0)$, $\frac{dt_{null}}{dR}\Big|_{r = 0} =1$.

For a homogenous density profile at $r = 0$, the tangent field on the apparent horizon in the marginally bound LTB situation yields a value of $\frac{dt_{ah}}{dR}\Big|_{r = 0} = -2/3$. For more insight, we have analyzed null geodesics to show the caustic of spacetime singularity, as shown in Fig.~(\ref{ltbcollapsez1}). To study the light-like geodesic behavior near the singularity, we will study the null geodesic emanating from the center of a collapsing body ($r=0$) and time  $t_{o}$. For homogeneous case, $F(r) = M_{o}r^3$, integrating the eq.~(\ref{nullgeodesicsltb}) for the outgoing null geodesic, we get,
\begin{widetext}
\begin{equation}
    t_{null} = t_{o} + \frac{M_{o} r^3}{12} - \frac{1}{2} \sqrt{M_{o}} r^2 \left(1 - \frac{3\sqrt{M_{o}}}{2} t_{o}\right)^{1/3} + r \left(1 - \frac{3\sqrt{M_{o}}}{2} t_{o}\right)^{2/3}. \label{tnull}
\end{equation}
\end{widetext}
Here we have answered an important question that is, Is it possible for a null geodesic originating from the center ($r=0$) and, at time $t_{p}$, to fall into the singularity formation time, $t_{s}(R_{b})$? or Which null geodesic, reach the point $(t_{s}, R_{b})$? Here we have found such a $t_{o}$ that is $t_{p}$ value that the light emanating from that point will plunge into the singularity formation time, $t_{s}$. Therefore, we take condition $t_{null} = t_{s}$, and we get, 
\begin{equation}
    t_{p} = t_{s} - \frac{M_{o} R_{b}^3}{12},
\end{equation}
where, $t_{s} = 2/3\sqrt{M_{0}}$ and $R_{b}$ is the outer most co-moving shell. All the null geodesics that emanate from the center in between the $t_{p} < t < t_{s}$ will converge into a singularity formation time, $t_{s}$, as shown in Fig.~(\ref{ltbcollapsez1}) and Fig.~(\ref{penrose1}). The red solid in Fig.~(\ref{penrose1}) shows a null hypersurface that originates from the $t_{p}$, with the tangent of the null generator, $\frac{dt_{null}}{dR}\Big|_{r = 0} =1$. This is the unique behavior of null geodesics near the singularity in the black hole formation. For the homogeneous case, the equation of the event horizon can be easily calculated, by equating the boundary point of the apparent horizon from Eq.(\ref{ltbtimeah}) with Eq. (\ref{tnull}). We obtain $t_{o}$ for event horizon as follows:
\begin{equation}
   t_{eh} = t_{s} - \frac{27}{12} M_{o} R_{b}^{3}.
\end{equation}
Here $t_{eh}$ is the time at which the first light ray emitted from the central shell $r=0$ and its future endpoint is timelike infinity, $i^{+}$. In the inhomogeneous collapse, the singularity is hidden in the tangent field value between the interval $[-2/3, 1)$.
\end{proof}

\begin{theorem}
Consider a marginal bound LTB spacetime $(\mathcal{M},g)$, which is seeded by dust fluid and satisfies the energy conditions. Let $\mathcal{A}$ be considered an apparent horizon curve. Then, the caustic point of outgoing null hypersurfaces will never form iff $0 \leq \frac{dt_{ah}}{dR}\Big|_{r = 0} < \frac{dt_{null}}{dR}\Big|_{r = 0}$.
\end{theorem}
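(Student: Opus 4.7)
The plan is to prove both directions of the iff by leveraging Theorem 1 for the lower bound of the tangent interval and a direct tangent-comparison argument (resting on Lemmas 1--3) for the upper bound. As already computed inside the proof of Theorem 1, an outgoing null geodesic emitted from the regular center at $t_0<t_s(0)$ satisfies $\frac{dt_{null}}{dR}\big|_{r=0}=1$, so the claimed interval is really $0\leq \frac{dt_{ah}}{dR}\big|_{r=0}<1$. The geometric content to be extracted is that in this interval no pair of outgoing null hypersurfaces can converge to a common point in the future (no future caustic) and no pair can diverge from a common point in the past (no past caustic).

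For the forward implication, assume $0\leq \frac{dt_{ah}}{dR}\big|_{r=0}<\frac{dt_{null}}{dR}\big|_{r=0}$. The contrapositive of Theorem 1 rules out a future caustic, since that would require a strictly negative AH tangent at $r=0$. To exclude a past caustic, recall that by definition such a caustic consists of two or more distinct outgoing null hypersurfaces emanating from the central singular point $(t_s(0),r=0)\in\mathcal{A}$. Because the AH tangent at the origin is non-negative but strictly less than $1$, any candidate outgoing null trajectory issuing from that point advances in $(R,t)$ with slope strictly larger than $\frac{dt_{ah}}{dR}\big|_{r=0}$, so for small $R>0$ it lies in $J^+(\mathcal{A})$. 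Lemma~3 then places it inside $\mathcal{T}$, and Lemma~1 forces $t$ to increase monotonically along it; no regular-point extension exists, so no null hypersurface actually emerges and a past caustic is impossible.

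For the converse I argue by contrapositive. If $\frac{dt_{ah}}{dR}\big|_{r=0}<0$, Theorem 1 immediately produces a future caustic, contradicting no caustic. If instead $\frac{dt_{ah}}{dR}\big|_{r=0}\geq \frac{dt_{null}}{dR}\big|_{r=0}$, then near the origin the AH curve lies in the future of every outgoing null trajectory from $(t_s(0),0)$, so such trajectories sit in $J^-(\mathcal{A})$ (the untrapped region). Solving (\ref{nullgeodesicsltb}) with the initial condition anchored at the singular point yields a one-parameter family of outgoing null hypersurfaces sharing that endpoint, which is precisely a past caustic. Combining both contrapositives gives the interval.

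The main obstacle I expect is not the gross tangent comparison but the rigorous handling of the two boundaries of the interval. At $\frac{dt_{ah}}{dR}\big|_{r=0}=0$ the leading-order analysis degenerates, and one must verify from the subleading expansion of $t_{ah}(R)$ against the null expansion of $t_{null}(R)$ (for example with $F(r)=M_0r^3-M_3r^6$ tuned to the critical ratio $M_3/M_0^{5/2}=2$ given by (\ref{tangentah})) that the strict inclusion of the null trajectory in $J^+(\mathcal{A})$ still holds so that Lemma~3 applies. At the upper threshold $\frac{dt_{ah}}{dR}\big|_{r=0}=1$ the past caustic appears exactly in the limit, and one must justify that the one-parameter family of outgoing null geodesics emerging from the singular point collapses to a single generator precisely as the tangent drops strictly below $1$. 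These endpoint analyses, rather than the interior of the interval, carry the technical weight of the argument.
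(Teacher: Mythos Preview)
Your proposal takes a genuinely different route from the paper. The paper's own proof of Theorem~2 is essentially a single observation inherited from the machinery of Theorem~1: it recalls (via Lemma~2) that $\frac{dt_{null}}{dR}\to 0$ at the future end of an outgoing null geodesic, and then asserts that when $0\leq \frac{dt_{ah}}{dR}\big|_{r=0}<\frac{dt_{null}}{dR}\big|_{r=0}$ the subset $\mathcal{X}\subset\mathcal{A}$ introduced in Theorem~1 is empty, so no caustic can form. The remainder of the paper's proof is an illustrative example (tuning $M_3=2M_0^{5/2}$) and a reference to the future-null-singularity model; the converse direction of the iff is never argued explicitly but is left to Theorems~1 and~3 to cover the complementary tangent ranges.

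By contrast, you organize the argument as a two-sided tangent comparison at $r=0$, explicitly invoking the contrapositive of Theorem~1 for the lower boundary and anticipating the content of Theorem~3 for the upper boundary, with Lemmas~1 and~3 handling the trapping of candidate null geodesics emitted from $(t_s(0),0)$. This buys you a cleaner treatment of the biconditional and a more honest acknowledgment of the delicate endpoint cases (which the paper does not discuss at all). What the paper's approach buys is economy: once the $\mathcal{X}$-set framework is in place from Theorem~1, the no-caustic claim reduces to emptiness of that set without needing to separately exclude past and future caustics. Your endpoint concerns at $\frac{dt_{ah}}{dR}\big|_{r=0}=0$ and $=1$ are well placed and go beyond what the paper establishes.
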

\begin{proof} 
As discussed in the proof of \textit{Theorem 1}, similarly from \textit{Lamma 2} at the future end of outgoing null geodesics $\frac{d t_{null}}{d R} = 0$. In the case of $0 \leq \frac{dt_{ah}}{dR}\Big|_{r = 0} < \frac{dt_{null}}{dR}\Big|_{r = 0} $, a subset $\mathcal{X}$ is empty set, $\mathcal{X} = \{\varnothing\}$. Therefore, the caustic of outgoing null hypersurfaces will never form. For zero caustic of outgoing null hypersurfaces, we take, $\frac{dt_{ah}}{dR} = 0$ in Eq.~(\ref{tangentah}) for inhomogeneous mass profile, for which
\begin{equation}
    M_{3} = 2 M_{0}^{5/2},
\end{equation}
as we have shown in the Fig.~(\ref{ltbcollapseInHomo4}).

The best example of a zero caustic point of the outgoing null hypersurface is given in \cite{Joshi:2023ugm}, where we have shown the formation of a future null singularity. Where, 
\begin{equation}
    t_{ah} = t_{s} = \frac{1}{3}\left((2M + R_{b})\sqrt{M+2R_{b}} -2M^{3/2}\right),\label{tahts}
\end{equation}
that is achieved by taking zero scale factor, $a(t=t_s)=0$. For the scale function and explicit expression of $F(t,r)$ given by in \cite{Joshi:2023ugm}. 
\begin{figure}[ht!]
\centering
\includegraphics[scale=.5]{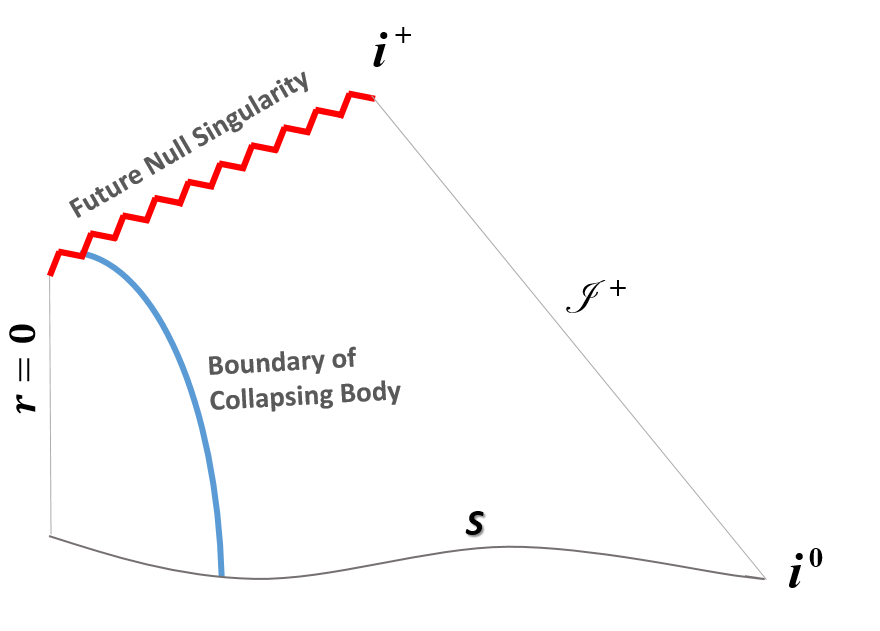}
\caption{Conformal diagram of Future-Null singularity spacetime}
\label{PenroseDiagram}
\end{figure}
As shown in Eq.~(\ref{tahts}) apparent horizon curve is independent of comoving radius $r$, therefore $\nexists$ $(t,r)\in (0,t_s)\times (0,R_b)$ satisfying  $F=R$. This suggests that the event and apparent horizon are absent. Fig.~(\ref{PenroseDiagram}) shows zero caustic point of the outgoing null hypersurface. 
\end{proof}

\begin{conj}
    At $r =0$, for generators of Cauchy horizon $\frac{dt_{null}}{dR}\Big|_{r \to 0} = 1$.
\end{conj}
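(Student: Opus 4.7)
The plan is to reduce the conjecture to a single asymptotic claim about the central behaviour of the Cauchy horizon generator. From Lemma~1 we have the exact tangent formula
\begin{equation*}
\frac{dt_{null}}{dR} \;=\; \left(1 - \sqrt{\frac{F(r)}{R(t,r)}}\,\right)^{-1},
\end{equation*}
so showing $dt_{null}/dR \to 1$ as $r\to 0$ along a generator of $\mathcal{CH}^{+}$ is equivalent to proving $F(r)/R(t(r),r)\to 0$ along the past-inextendible parametrisation $t = t(r)$ terminating at the central singularity $(t_s(0),0)$.

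Next, I would substitute the closed-form marginally bound solution $R^{3/2}(t,r)=r^{3/2}-\tfrac{3}{2}\sqrt{F(r)}\,t$ into the outgoing null equation $dt/dr = R'(t,r)$, expand $F(r)=M_0 r^3 + M_3 r^6 + \ldots$ near $r=0$, and define the deviation $\xi(r) := t_s(0) - t(r)$. A direct computation using $t_s(0)=2/(3\sqrt{M_0})$ yields the leading small-$r$ relation $R(t(r),r)\simeq r\bigl(\tfrac{3\sqrt{M_0}}{2}\,\xi(r)\bigr)^{2/3}$. The null equation then reduces to an asymptotic ODE for $\xi(r)$, whose dominant balance becomes an algebraic root equation of the Dwivedi--Joshi type. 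The Cauchy horizon generator should be selected by a positive root giving $\xi(r)\sim C r^{a}$ with $0<a<3$; then $R\sim\text{const}\cdot r^{1+2a/3}$ and therefore $F/R\sim\text{const}\cdot r^{2-2a/3}\to 0$, which delivers $dt_{null}/dR\to 1$.

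The main obstacle will be to rigorously identify the exponent $a$ and verify $a<3$ across every admissible inhomogeneity class, since the root equation depends sensitively on the expansion coefficients of $F(r)$ (for instance on combinations like $M_3/M_0^{5/2}$ appearing in Eq.~(\ref{tangentah})), and one must rule out degenerate branches with $a\ge 3$. A useful consistency check is provided by the regular family bordering the Cauchy horizon: any outgoing null geodesic launched from $(t_0,0)$ with $t_0<t_s(0)$ satisfies $R/r\to (1-\tfrac{3\sqrt{M_0}}{2}t_0)^{2/3}>0$ by direct substitution, so $F/R\to 0$ and $dt_{null}/dR = 1$ at $r = 0$. The conjecture asserts precisely that this identity survives the singular limit $t_0\to t_s(0)$, i.e.\ that the Cauchy horizon generator inherits the central-tangent value $1$ uniformly from the regular null family which it bounds from above.
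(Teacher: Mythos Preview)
The paper does not prove this statement---it is explicitly labelled a \emph{conjecture}, and the text following it offers only a heuristic motivation. That motivation is precisely the continuity argument you give in your final paragraph: for any regular outgoing null geodesic launched from $(t_0,0)$ with $t_0<t_s(0)$ one has $F(0)=0$ while $R>0$, hence $\dot R(t_0,0)=1$ and $dt_{null}/dR|_{r=0}=1$; the authors then simply \emph{assert} that this value survives the singular limit $t_0\to t_s(0)$ that selects the Cauchy horizon. There is no asymptotic analysis, no root-equation argument, and no attempt to control the exponent you call $a$.

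Your proposal is therefore substantially more ambitious than the paper's own treatment. Reducing the question to $F/R\to 0$ along the generator, and then to a scaling exponent $a<3$ via the Dwivedi--Joshi root equation, is the natural route to a genuine proof; the paper itself invokes that root equation (Eq.~(\ref{rootequation})) only later, for the separate purpose of diagnosing local visibility. One caution: your deviation $\xi(r)=t_s(0)-t(r)$ is \emph{negative} along the actual Cauchy horizon generator, since that geodesic leaves the central singularity moving forward in time ($t(r)>t_s(0)$ for $r>0$ while still $t(r)<t_s(r)$). Your leading-order formula $R\simeq r\bigl(\tfrac{3\sqrt{M_0}}{2}\xi\bigr)^{2/3}$ then needs the next term in the expansion of $\sqrt{F(r)}$ (the $M_3 r^3$ correction) to keep $R^{3/2}>0$; this is exactly where the inhomogeneity parameters enter and where the root equation in $X=R/r^{\alpha}$ becomes the cleaner variable. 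With that adjustment your outline is a reasonable sketch of what a proof would require, and your identification of the obstacle---ruling out branches with $a\geq 3$---is the honest reason the paper leaves the statement as a conjecture.
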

In the inhomogeneous case, if the Cauchy horizon is present within the spacetime then we could argue that the naked singularity case arises, the numerical value of the mass function approaching the center is zero ($\lim_{r \to 0} F(r) = 0$). Now, from Eq.~(\ref{rdot}), the physical radius also goes to zero at the zero value of the comoving radius. However, the physical radius reaches zero at a particular time slice, $t_{s}$. Therefore, at $t_{s}$, $\dot{R}(t,r)$ is negatively infinite. Now, at $r =0$ and consider $t_{o}$ value is very close to the 
$t_{s}$ with condition $t_{o}<t_{s}$, that will provide $\dot{R}(t_{o},0) = 1$. From this condition, we conjectured that for the Cauchy horizon, tangent $\frac{dt_{null}}{dR}\Big|_{r = 0} =1$. Here, $t_{null}$ is co-moving time associated with $\gamma$, that is $\gamma(t_{null})$.
As suggested in Eq.~(\ref{CauchyD}), the future Cauchy horizon is the future boundary of Cauchy development. Therefore at $r \to 0$ and $R \to \epsilon$,
\begin{equation}
    t_{null} = t_{s}(0) + \epsilon
\end{equation}
Where $\epsilon$ is infinitesimally small value. In local visibility, the lower bound of the past incomplete null curve should not be smaller than the singularity formation time, $t_{s}(0)$, that is,
\begin{equation}
    \inf_{r\in \mathbb{R}^{+}} t_{null}(r) \geq t_{s}(0).\label{inf}
\end{equation}
By taking an initial value of outgoing null generators as $R = \epsilon$ and $t = t_{s}(0) + \epsilon$, and satisfy the inequality Eq.~(\ref{inf}) develop a Cauchy horizon. The red solid line is the Cauchy horizon, in Fig.~(\ref{ltbcollapseInHomo6}).

%In the local frame, ingoing and outgoing null vectors are orthogonal to each other which implies that the inner product of the vectors is zero, $u \wedge v = 0$.

%The expansion scalar for in-going null geodesics is $\theta_{n} = -2/R(t,r)$. If the physical radius is positive, $R(t,r)>0$, then the $\theta_{n}$ is always initially negative, it will eventually reach negative infinity as $R(t,r)$ tends towards zero for in-going null congruences.

%For homogeneous case, $F(r) = M_{o}r^{3}$, singularity formation time and apparent horizon \textcolor{green}{formation time} would be,
%\begin{equation}
    %t_{s} = \frac{2}{3\sqrt{M_{o}}}, \hspace{1.5cm}  t_{ah}(r) = \left(t_{s} - \frac{2}{3} M_{o} r^3\right). \label{timetstah}
%\end{equation}
%From eq.(\ref{timetstah}), we can conclude that $r < \frac{1}{\sqrt{M_{o}}}$. At $R_{b} = \frac{1}{\sqrt{M_{o}}}$, initially at $t=0$, a marginally trapped surface is formed, which could or could not be a physical scenario.
%In this case (marginally trapped case), null fluid may give such a situation. 

\begin{theorem}
Consider a marginally bound LTB spacetime $(\mathcal{M},g)$, which is seeded by dust fluid and satisfies the energy conditions. Let $\mathcal{A}$ be considered an apparent horizon curve. Then singularity is to be locally visible and Past-caustic of an outgoing null hypersurface form iff at least the condition holds: $\frac{dt_{ah}}{dR}\Big|_{r = 0} > 1$.
\end{theorem}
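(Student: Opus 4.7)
The plan is to adapt the tangent-comparison strategy used in the proofs of Theorems 1 and 2, now for the case $\frac{dt_{ah}}{dR}\big|_{r=0} > 1$. The key input beyond Lemmas 1 and 2 is Conjecture 1, which identifies $\frac{dt_{null}}{dR}\big|_{r=0} = 1$ as the characteristic tangent of the past-inextendible null generators of $\mathcal{CH}^{+}$. Since a past caustic of outgoing null hypersurfaces was defined in the introduction as the existence of multiple outgoing null hypersurfaces emerging from a single spacetime event at the same time, the entire question reduces to showing that the singular point $(t_{s}(0),0)$ admits a one-parameter family of outgoing null generators escaping into $\mathcal{M}$, and that such a family exists precisely when $\frac{dt_{ah}}{dR}\big|_{r=0} > 1$.

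For the forward implication, I would assume $\frac{dt_{ah}}{dR}\big|_{r=0} > 1$ and argue as follows. By Conjecture 1 the Cauchy-horizon generator carries tangent $1$ at the center, while Lemma 1 together with Lemma 2 controls the behaviour of $\frac{dt_{null}}{dR}$ along an outgoing null geodesic. Since $\mathcal{A}$ now rises strictly faster than rate $1$ at $r=0$ whereas outgoing null geodesics emanating from $(t_{s}(0),0)$ initially carry tangent $1$, there is a wedge of initial directions at the central singular point along which the null hypersurface lies strictly below $\mathcal{A}$ for a positive interval of $r$, hence stays out of the trapped region by the contrapositive of Lemma 3. Parametrising these directions produces a one-parameter family of outgoing null generators emanating from the single point $(t_{s}(0),0)$, which simultaneously gives the past-caustic structure and local visibility.

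For the converse, I would suppose a past caustic forms and the singularity is locally visible, so at least one outgoing null geodesic $\gamma$ past-ends at $(t_{s}(0),0)$ and future-reaches finite radius. If $\frac{dt_{ah}}{dR}\big|_{r=0} \leq 1$, then $\frac{dt_{ah}}{dR}\big|_{r=0} < 0$ is excluded by Theorem 1 (forces a future caustic and hidden singularity) and $0 \leq \frac{dt_{ah}}{dR}\big|_{r=0} < 1$ is excluded by Theorem 2 (no caustic of outgoing null hypersurfaces forms at all). The boundary value $\frac{dt_{ah}}{dR}\big|_{r=0} = 1$ coincides with the Cauchy-horizon generator tangent and is handled by the infimum condition~(\ref{inf}): equality there is saturated only on $\mathcal{CH}^{+}$, so at most a single generator escapes the singular point, contradicting the existence of a caustic (which requires a family). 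Hence $\frac{dt_{ah}}{dR}\big|_{r=0} > 1$.

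The main obstacle, as in Theorem 2, is the reliance on Conjecture 1: without a proof that the Cauchy-horizon generator has tangent exactly $1$ at the center, the clean separation between the ``no caustic'' and ``past caustic'' regimes rests on a conjectural threshold. A secondary technical point is showing that the wedge of escaping directions at $(t_{s}(0),0)$ really integrates to a family of well-defined outgoing null hypersurfaces, not merely a set of tangent directions; this requires a local existence statement for the characteristic equation $\frac{dt_{null}}{dR} = (1 - \sqrt{F/R})^{-1}$ with singular initial data $(t_{s}(0),0)$, obtained through the power-series ansatz $t_{null}(r) = t_{s}(0) + \chi_1 r + \chi_2 r^{2} + \cdots$ introduced earlier and solved order by order for the inhomogeneous profile $F(r) = M_{0} r^{3} - M_{3} r^{6}$ in the regime $M_{3} > 2 M_{0}^{5/2}$ selected by Eq.~(\ref{tangentah}).
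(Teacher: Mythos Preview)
Your approach is essentially the same as the paper's: both arguments rest on the tangent-comparison strategy from Theorems 1 and 2, use Conjecture~1 to fix $\frac{dt_{null}}{dR}\big|_{r=0}=1$ as the critical slope for the Cauchy-horizon generator, and deduce that $\frac{dt_{ah}}{dR}\big|_{r=0}>1$ opens a wedge from $(t_s(0),0)$ in which outgoing null hypersurfaces escape the trapped region. The paper's own proof is terser: it states the existence of the subset $\mathcal{X}$ with $t_0=t_s(0)$ and $\frac{dt_{null}}{dR}\big|_{r=0}\leq 1$, identifies its outer boundary as the $2$-sphere $\psi=\mathcal{CH}^+\cap\mathcal{A}$, and then specialises to $F(r)=M_0 r^3-M_3 r^6$ to extract the explicit threshold $M_3>5\,M_0^{5/2}$. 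Your converse via case exclusion through Theorems~1 and~2 is more explicit than what the paper writes, but not a different idea.

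One concrete slip: in your last paragraph you quote the regime as $M_3>2\,M_0^{5/2}$, but that value corresponds to $\frac{dt_{ah}}{dR}\big|_{r=0}=0$ via Eq.~(\ref{tangentah}). The condition $\frac{dt_{ah}}{dR}\big|_{r=0}>1$ gives $M_3>5\,M_0^{5/2}$, which is exactly the inequality the paper records.
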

\begin{proof} 
As we have discussed in \textit{Conjecture 1}, the value of tangent on the null generator at the past end of outgoing null geodesics will be at least $\frac{d t_{null}}{d R} = 1$. We consider in theorem  $\frac{dt_{ah}}{dR}\Big|_{r = 0} > 1$, that is if $\frac{dt_{ah}}{dR}\Big|_{r = 0} > \frac{dt_{null}}{dR}\Big|_{r = 0} $. Therefore, $\exists$ a subset $\mathcal{X}$, with initial conditions $t_{0} = t_{s}(0)$ on $\gamma$, $\frac{dt_{null}}{dR}\Big|_{r = 0} \leq 1$.

Outer most boundary of $\mathcal{X}$ that is $\psi$, $\psi = \mathcal{CH}^{+} \cap \mathcal{A}$, is a two dimensional $\mathbb{S}^{2}$ surface embedded in the three dimensional spacelike submanifold of $\mathcal{M}$. Central singularity within LTB spacetime $\mathcal{S}_{LTB} = \{t= t_{s}(r), r \in [0, R_{b}]\}$. A point in spacetime diagram, $t_{s}(0) \in \mathcal{S}_{LTB}$, if it is locally visible (that is past caustic point of outgoing null hypersurface) in LTB metric which maps a line in a conformal diagram shown in a red sig-sag line, Fig.~(\ref{penrose4}), which describes the nature of past null singularity. Point on the Cauchy horizon that describe $S^{2}$ surface in the spherical symmetry, 
$$\Sigma_{t} \cap \mathcal{CH}^{+} \setminus \{p \in \mathcal{S}_{LTB}\} = p$$ $\forall p$ energy conditions are satisfied. Here $\Sigma_{t}$ is a spacelike hypersurface with constant co-moving time, $t$. So from that, we can conclude that on the Cauchy horizon, matter is well-behaved.
From the \textit{Theorem 3} we can say that in the inhomogeneous collapse with the mass function $F(r) = M_{0} r^{3} - M_{3}r^{6}$, for local visibility, the condition must satisfy at least the following inequality:
\begin{equation}
   \frac{dt_{ah}}{dR}\Big|_{r = 0} > 1, \,\,\,\,\,\,\,\,\,\,\ M_{3} > 5 M_{o}^{5/2}.\label{ineq}
\end{equation}
This suggests that the singularity will be globally or locally visible in the interval $(1, \infty)$.
\end{proof}
Null geodesic curve for the mass profile of $F(r) = M_{0} r^{3} - M_{3} r^{6}$ is given as follows:
\begin{widetext}
\begin{equation}
\begin{split}
    t_{null} = t_{o} + r \left(1 - \frac{3\sqrt{M_{o}}}{2} t_{o}\right)^{2/3} - \frac{1}{2} \sqrt{M_{o}} r^2 \left(1 - \frac{3\sqrt{M_{o}}}{2} t_{o}\right)^{1/3} + \frac{M_{o} r^3}{12} + \frac{M_{3} t_{o} r^{4}}{2 \sqrt{M_{o}} \left(1 - \frac{3\sqrt{M_{o}}}{2} t_{o}\right)^{1/3}} \\ \vspace{0.5cm}- \frac{12 M_{3} (-4 + 5 \sqrt{M_{0}} t_{o}) r^{5}}{120 \sqrt{M_{o}} \left(1 - \frac{3\sqrt{M_{o}}}{2} t_{o}\right)^{2/3}} - \frac{12 M_{3} (8 - 15 \sqrt{M_{o}} t_{o}) r^{6}}{720 (2 - 3 \sqrt{M_{o}} t_{o})} + \mathcal{O}(r^{7}). \label{tnull2}
    \end{split}
\end{equation}
\end{widetext}
Using Eq.~(\ref{tnull2}) we have computed null geodesics in Fig.~(\ref{InHomo}).
Regarding the formation of at least locally visible singularity, we have the following statement: Consider an unhindered gravitational collapse of a spherically symmetric perfect fluid. The singularity formed as an end state of such collapse is at least locally naked if and only if $\exists$ $X_0\in \mathbb{R}^{+}$ as a root of $V(X)$, where 
\begin{widetext}
    \begin{equation}\label{rootequation}
    V(X)=X-\frac{1}{\alpha}\left(X+\sqrt{\frac{F_{0}(0)}{X}}\left(\chi_1(0)+2r\chi_2(0)+3r^2\chi_3(0)\right)r^{\frac{5-3\alpha}{2}}\right)\left(1-\sqrt{\frac{F_0(0)}{X}}r^{\frac{3-\alpha}{2}}\right).
    \end{equation}
   Here
    \begin{equation}
        \alpha \in \left\{\frac{2n}{3}+1;\hspace{0.2cm} \hspace{0.2cm} n\in \mathbb{N} \right\}, \hspace{0.5cm}
    \chi_i(v)=\frac{1}{i!}\frac{\partial ^i t(r,v)}{\partial r^i}\bigg |_{r=0}, \hspace{0.5cm} F(r,v)=\sum_{i=0}^{\infty}F_i(v)r^{i+3},
    \end{equation}
\end{widetext}
and $t=t(r,v)$ is the time curve \cite{Joshi:1993zg}. To depict examples of past null singularities, we consider Misner-Sharp mass functions $F(r)$ given by
    \begin{equation}\label{exampleltbg}
        F(r)=F_0 r^3+ F_3 r^6; \hspace{0.3cm} F_0>0, \hspace{0.3cm} F_3<0. 
    \end{equation}
Corresponding $V(X)$ (Eq. (\ref{rootequation})) for such mass functions are obtained as: 
    \begin{equation}\label{rooteqnltb}
    \begin{split}
         V(X)=  2 X^2 + \sqrt{F_0}X^{3/2} - 3\sqrt{F_0} \chi_3(0) \sqrt{X} \\ + 3F_0 \chi_3(0),
    \end{split}
    \end{equation}
    where 
    \begin{equation}
        \chi_3(0)= -\frac{1}{2} \int_{0}^{1} \frac{F_3/a}{\left(F_0/a \right)^{3/2}}~ da,
    \end{equation}
Setting $F_0=1$, Eq. (\ref{rooteqnltb}) has positive real roots if and only if $F_3 < 25.967$, in which case, the singularity is visible \cite{Joshi:2023ugm}. This numerical value also satisfies the inequality given in Eq.~(\ref{ineq}), and the null geodesic curve (Eq.~(\ref{tnull2})) must satisfy Eq.~(\ref{inf}).

\section{genericity of zero caustic point}\label{sec3}
We defined the genericity of compact objects in terms of satisfying three main important conditions 1) Energy conditions in spacetime 2) Causal structure of spacetime in terms of global hyperbolicity 3) Strength of singularity in the sense of the Tipler strong singularity.

\subsection{Energy conditions}
The mass function considered ($F(r) = M_{0} r^{3} - M_{3} r^{6}$ and mass profile derived in \cite{Joshi:2023ugm}) is a monotonically increasing function and its satisfy all energy conditions. The result implies that free-falling observers co-moving with the dust particles measure the finite energy of the field, even arbitrarily close to the central singularity. In this sense, the solution of Einstein's field equation has a unique solution.

\subsection{Global hyperbolicity}
The cases in which the caustic of outgoing null hypersurfaces will
never form, they possess Cauchy surfaces. Cauchy horizon is absent in such cases. For example, as shown in Fig.~(\ref{ltbcollapseInHomo4}) and Fig.~(\ref{PenroseDiagram}). The spacelike hypersurface $\Sigma$ is present in the manifold that holds the condition of the domain of dependence, $\mathcal{D}(\Sigma) = \mathcal{D}^{+}(\Sigma) \cup \mathcal{D}^{-}(\Sigma)$, the $\mathcal{M}$ is globally hyperbolic.

\subsection{Strength of the null singularity}
In the gravitational collapse of the spherically symmetric spacetime described by eq.~(\ref{spacetime}), the following is the tangent field of the outgoing null geodesic as we approach singularity \cite{Mosani:2023vtr}:
\begin{equation}
    K^{t} = \frac{dt}{d\lambda} = \frac{\mathcal{P}}{R} = \frac{R}{\lambda}, \hspace{1cm} K^{r} = \frac{dr}{d\lambda} = \frac{\mathcal{P}}{R R^{'}} = \frac{r}{\lambda}.\label{ktkr}
\end{equation}
$\mathcal{P}$ is the functional form given in \cite{Mosani:2020ena}. For a singularity to be strong in the sense of Tipler \cite{Tipler:1977zza}, according to Clarke and Krolak \cite{Clarke}, the following inequality must hold along at least one null geodesic:
\begin{equation}
    \lim_{\lambda \to 0} \lambda^{2} R_{\mu \nu}K^{\mu}K^{\nu} > 0.
\end{equation}
For the metric given in equation (\ref{spacetime}), by using eq.~(\ref{ktkr}), we get,
\begin{equation}
    \left(\frac{2 R'(t,r)}{R((t,r)} \left(\dot{R}(t,r)\dot{R}^{'}(t,r) - R'(t,r) \ddot{R}(t,r)\right)\right)r^2 > 0. \label{inequal}
\end{equation}
By analyzing the inequality given in eq.~(\ref{inequal}), we can conclude that for homogeneous ($M_{o}r^{3}$) and inhomogeneous ($M_{o}r^{3} - M_{1}r^{6}$) cases in the marginally bound LTB collapse, the inequality condition holds, which directly implies that the singularity developed in both cases are Tipler strong singularity.  Similarly in the case of Future-null singularity, especially in the gravitational collapse of null singularity spacetime, the given condition in eq.~(\ref{inequal}), is hold.

\section{Results and discussion}\label{result}
The conclusions from this study can be summarised as follows:
\begin{itemize}
\item The null geodesic behavior was investigated near the time of singularity formation, $t_{s}(r)$. In the case of marginally bound LTB, the null geodesics that emanate from the center between the $t_{p}<t<t_{s}$ will enter the singularity formation time as seen in the fig.~(\ref{ltbcollapsez1}) which shows the Future caustic point of the outgoing null hypersurface. A caustic point where numerous null geodesics are emitted from one specific moment, which we defined as the Past-caustic point of the outgoing null hypersurface, will result if the inhomogeneity is sufficiently large, as can be seen in fig. (\ref{ltbcollapseInHomo6}). In the case of Future null singularity case, $t_{p} = t_{s}$. 

\item We analytically showed the local visibility of singularity in the marginally bound LTB metric and we have extensively discussed the nature of singularity using the tangent on the apparent horizon. For a homogenous density profile at $r = 0$, the tangent on the apparent horizon in the marginally bound LTB situation yields a value of $-2/3$. While in the case of inhomogeneity, the interval $(-2/3,1]$ gives the hidden singularity, and $(1,\infty)$ gives the local and global visible singularities. Since we were unable to demonstrate an analytical expression for the global visibility of the singularity in this instance, we came to the numerical conclusion that the globally visible singularity will be produced when $ \frac{dt_{ah}}{dR}\Big|_{r = 0} >30$. The caustic behavior is absent during the formation of null singularities, $t_{p} = t_{s}$.

\item We have analysed the zero caustic point in the outgoing null hypersurface which will give the possibility of the formation of future null and about the globally hyperbolic spacetime. The nature of the caustic point provides information about the central singularity. Additionally, we have analyzed the strength of the singularity as a result, we concluded that the singularity possesses zero caustic point could give Tipler's strong singularity as the final state.

\item The strongest alternative and solution to the causality criterion can be used to avoid the predictability issue by assuming that spacetime is globally hyperbolic. Since the region $J^{-}(\mathcal{J}^{+})$ in fig.~(\ref{PenroseDiagram}) is globally hyperbolic, weak cosmic censorship holds for the future null singularity by definition\cite{Held:1980gc}. Hence, the singularity can only be seen if the observer is at a timelike infinity, $i^{+}$. We discussed the causal structure of singularity spacetime for various cases using the expansion scalar of outgoing null geodesics and concluded that Future null singularity appears to be neither a black hole nor a naked singularity as shown in fig.~(\ref{PenroseDiagram}). Which possess zero caustic point in the outgoing null hypersurfaces.
\end{itemize}

%The well-known question: "Is it possible that the matter is crushed into a single point?" In this situation, we have proposed an alternate method that holds, the weak cosmic censorship and strong gravity regions accessible and may provide us with quantum gravity information near the singularity.

%%%%%%%%%%%%%%%%%%%%%%%%%%%%%%%%%%%%%%%%%%%%%%
%%%%%%%%%%%%%%%%%%%%%%%%%%%%%%%%%%%%%%%%%%%%%%%

\end{document}